\def\pgf@circ@spst@path#1{\pgf@circ@bipole@path{spst}{#1}}
\tikzset{switch/.style = {\circuitikzbasekey, /tikz/to path=\pgf@circ@spst@path, l=#1}}
\tikzset{spst/.style = {switch = #1}}
\let\proof\@undefined                        % undefine \proof
\let\endproof\@undefined                  % undefine \endproof
\algnewcommand{\algorithmicgoto}{\textbf{go to}}%
\algnewcommand{\Goto}[1]{\algorithmicgoto~\ref{#1}}%
\algnewcommand{\LineComment}[1]{\Statex \(\triangleright\) #1}
\algnewcommand{\LineCommentN}[1]{\Statex \hspace{1cm}\(\triangleright\) #1}
\newcommand{\argmin}{\operatornamewithlimits{arg\ min}}
\newtheorem{prop}{Proposition} % this could go into the preamble
\newtheorem{thm}{Theorem}
	\newtheorem{assumption}{Assumption}
\newtheorem{lem}{Lemma}
\newtheorem{defn}{Definition}
\newtheorem{problem}{Problem}
\let\oldbibliography\thebibliography
\renewcommand{\thebibliography}[1]{%
  \oldbibliography{#1}%
}
\newcommand{\yong}[1]{{\color{black} #1}}
\newcommand{\moh}[1]{{\color{black} #1}}
\newcommand{\mo}[1]{{\color{black} #1}}
\newcommand{\fa}[1]{{\color{black} #1}}
\newcommand{\mk}[1]{{\color{black} #1}}
\newcommand{\mkr}[1]{{\color{black} #1}}
\newcommand{\md}[1]{{\color{black} #1}}
\newcommand{\yo}[1]{{\color{black} #1}}
\begin{document}

% paper title
\title{\LARGE \bf \md{$\mathcal{H}_{\infty}$-Optimal} Interval Observer Synthesis for %\md{Bounded-Error} 
\yo{Uncertain} Nonlinear Dynamical Systems via Mixed-Monotone Decompositions} % Matrices} }

% You will get a Paper-ID when submitting a pdf file to the conference system
%\author{Author Names Omitted for Anonymous Review. Paper-ID Sze Zheng Yong}
\author{%
Mohammad Khajenejad and Sze Zheng Yong\\
\thanks{%$^1$ The authors are with the Laboratory for Information and Decision Systems,
%Massachusetts Institute of Technology, Cambridge, MA, USA (e-mail: szyong@mit.edu, mhzhu@mit.edu, frazzoli@mit.edu).
M. Khajenejad and S.Z. Yong are with the School for Engineering of Matter, Transport and Energy, Arizona State University, Tempe, AZ, USA. (e-mail: \{mkhajene, fshoaib, szyong\}@asu.edu.)}
\thanks{This work is partially supported by National Science Foundation grants CNS-1932066 and CNS-1943545.}
%\vspace{-0.35cm}
}

\maketitle
\thispagestyle{empty}
\pagestyle{empty}

\begin{abstract}
\yo{This paper introduces a novel $\mathcal{H}_{\infty}$-optimal interval observer synthesis for bounded-error/uncertain locally Lipschitz nonlinear continuous-time (CT) and discrete-time (DT) systems with noisy nonlinear  observations. Specifically, using mixed-monotone decompositions, the proposed observer is correct by construction, i.e., the interval estimates readily frame the true states without additional constraints or procedures. In addition, we provide sufficient conditions for input-to-state (ISS) stability of the proposed observer and for minimizing the $\mathcal{H}_{\infty}$ gain of the framer error system in the form of semi-definite programs (SDPs) with Linear Matrix Inequalities (LMIs) constraints. Finally, we compare the performance of the proposed $\mathcal{H}_{\infty}$-optimal interval observers with some  benchmark CT and DT interval observers.}     
% proposes a novel unified interval-valued observer synthesis approach for locally Lipschitz nonlinear continuous-time (CT) and discrete-time (DT) systems with nonlinear observations. A key feature of our proposed observer, which is derived using \emph{mixed-monotone decompositions}, is that it is
% Applying \emph{mixed-monotone decompositions}, we show that the proposed observer is 
 %\emph{correct} by construction (i.e., the true state trajectory of the system is \emph{framed} by the states of the observer) without the need for imposing additional constraints and assumptions such as global Lipschitz continuity or contraction, %nor coordinate transformations, 
 %as is done in existing approaches in the literature.  % and hence, the observer error system is positive/cooperative.  
 %Furthermore, we derive sufficient conditions for designing stabilizing observer gains in the form of Linear Matrix Inequalities (LMIs).
% by designing a stabilizing observer gain through solving some semi-definite programs (SDP), we relax the need to impose additional assumptions such as global Lipschitz continuity or being contractive on the system dynamics, as well as the requirement of coordinate/state transformations to guarantee \emph{stability}. 
% Finally, we \mk{compare the performance of our} observer design \mk{with some benchmark CT and DT observers in the literature}. %\
% to study its effectiveness.
 \end{abstract}

%\vspace{-0.125cm}
\section{Introduction} %\vspace{-0.05cm}
%\emph{Motivation.} 
\md{\yo{Engineering applications, e.g., monitoring, system identification, control synthesis, and fault detection often} require knowledge of system states. However, due to the \yo{presence} of noise/\yo{uncertainties and/or inaccuracies} in sensor measurements, system states are \yo{usually} not \yo{exactly known}. %fully measurable in a realistic point of view. 
This has motivated the design of state observers to estimate system states \yo{using} uncertain/noisy observations and system dynamics. In particular, for bounded-error settings, i.e., when  uncertainties are  \yo{set-valued (and distribution-free)}, \emph{interval observer} \yo{designs have recently gained much attention} due to their  \yo{simple principles and computational efficiency} \cite{wang2015interval}.}  
%Knowledge of system states is essential in almost all engineering applications, including fault detection, system identification and monitoring. However, in many realistic cases, system states are not fully measurable/measured and further, the sensor measurements may be limited or inaccurate. Thus, state observers have been designed to estimate system's states based on system dynamics and noisy/uncertain observations. When dealing with systems with set-valued (i.e., distribution-free) uncertainties, 
%
%In order to overcome such limitations, in this framework, \emph{observer algorithms} were proposed and they have gained a growing attention due to their ability to estimate system states from  system dynamics and observations. In this context and in the presence of set-valued (i.e., distribution-free) uncertainties, set-theoretic approaches have been developed to estimate observation-compatible set-valued estimates of systems. Particularly, 
%\emph{interval observers} have become increasingly popular due to their simple principles and computational efficiency \cite{wang2015interval,mazenc2013robust}.

%\emph{Literature review.} 
\md{
Recent years have produced an extensive body of seminal literature on the design of interval/set-valued observers for several classes of systems, e.g., linear, cooperative/monotone, Metzler and mixed-monotone dynamics \cite{wang2015interval,chebotarev2015interval,tahir2021synthesis,khajenejad2021intervalACC,khajenejad2020simultaneousCDC}.  %\cite{tahir2021synthesis,khajenejad2021intervalACC,khajenejad2020simultaneousCDC},   
%The design of interval observers (a particular form of set-valued observers) has been extensively investigated in the literature for various classes of dynamical systems such as linear time-invariant (LTI) systems \cite{mazenc2011interval}, linear parameter varying (LPV)/quasi-LPV systems \cite{wang2015interval,chebotarev2015interval}, monotone/cooperative dynamics \cite{moisan2007near,raissi2010interval}, Metzler systems \cite{mazenc2013robust} and mixed-monotone dynamics \cite{tahir2021synthesis,khajenejad2021intervalACC,khajenejad2020simultaneousCDC}. 
\yo{It has been noted that the design of interval observers that must simultaneously satisfy correctness (framer property) and stability/convergence is not a trivial task, even for linear systems \cite{chebotarev2015interval}. Thus, especially when the system dynamics is nonlinear, either relatively restrictive assumptions on system properties were required to guarantee the applicability %validity 
of the proposed approaches, or monotone systems properties \cite{farina2000positive} need to be directly imposed to satisfy positivity/cooperative behavior of  
 %to guarantee positivity of 
 the error dynamics.} %in order to synthesize the observer gains} 
%To obtain cooperative observer error dynamics, the design of interval observers has either directly relied on monotone systems theory \cite{farina2000positive}, 
 %or relatively restrictive assumptions about the existence of certain system properties were imposed to guarantee the {applicability} %validity 
%of the proposed approaches. %The main drawback of the aforementioned approaches is that 
%\mk{However,} even for linear systems, it is not easy nor guaranteed to synthesize the framer gain \mk{to} %simultaneously %and the coordinate transformation matrix to 
%satisfy correctness and stability at the same time \cite{chambon2016overview}. 

This \yo{challenge} has been addressed for specific \yo{system} classes %\yo{of} systems 
by leveraging M\"{u}ller's theorem \yo{or} interval arithmetic-based approaches \cite{kieffer2006guaranteed}, transformation to a positive system before designing an observer  (only for linear systems) \cite{cacace2014new} or applying time-invariant/varying state transformations \cite{tahir2021synthesis}.} 
%to obtain such properties was relaxed for certain classes of systems, by applying time-invariant/varying state transformations \cite{tahir2021synthesis,mazenc2011interval}, transformation to a positive system before designing an observer \cite{cacace2014new} (only applicable to linear systems) or leveraging interval arithmetic or M\"{u}ller's theorem-based approaches \cite{kieffer2006guaranteed}. 
%satisfying in the same time the cooperativity/correctness constraint/property and the estimation accuracy/stability \cite{chambon2016overview}. 
%To avoid such conservatism and    
%the validity of the proposed approaches often depends on relatively strong assumptions about the existence of certain system properties, such as 
% \cite{}, ,  and/or Hurwitz partial linearization of nonlinearities . 
%The problem of designing 
%An $L_2/L_{\infty}$ unknown input interval observer {design} {for continuous-time LPV systems} is studied in \cite{ellero2019unknown}. %where the required Metzler property is formulated as a part of a semi-definite program. 
%However, % unfortunately, their 
%{this approach is not applicable for general discrete-time nonlinear }% systems,} %approach is limited to continuous-time LPV systems. 
%{dynamics and, moreover, %in their setting, 
%the  considered system do not include %%(potaentially unbounded) 
%unknown inputs that} %do not 
%affect the output equation. 
\md{\yo{On the other hand}, the work in \cite{efimov2013interval} leveraged \emph{bounding functions} to design interval observers for a class of continuous-time nonlinear systems \yo{under some relatively restrictive} assumptions on the nonlinear dynamics, without providing a systematic approach to compute the bounding functions nor necessary/sufficient conditions for their existence. More recently, bounding/mixed-monotone decomposition functions were applied in \cite{tahir2021synthesis} to design interval \yo{observers} for nonlinear discrete-time dynamics, where conservative additive terms were added to the error dynamics to guarantee its positivity. Moreover, to best of our understanding, the resulting Linear Matrix Inequalities (LMIs) do not include the required conditions to guarantee that the computed bounding functions are decomposition functions. 

Decomposition functions were also applied in the authors' previous work \cite{khajenejad2021intervalACC,khajenejad2020simultaneousCDC} to design interval observers for nonlinear discrete-time systems \yo{under the (restrictive)} assumption of global Lipschitz continuity as well as additional sufficient structural system properties to guarantee stability. Further, the applied decomposition functions were not necessarily the tightest. In our \yo{preliminary} work \cite{moh2022intervalACC}, we proposed an interval observer design for noiseless nonlinear CT and DT systems based on tight remainder-form decomposition functions \cite{khajenejad2021tight}. Our goal is to extend the design in \cite{moh2022intervalACC} to \yo{noisy uncertain} %bounded-error noisy 
nonlinear dynamics in this work.}   

\md {In particular, in this paper, we propose a unified framework to synthesize $\mathcal{H}_{\infty}$-\yo{optimal} and input-to-state stable (ISS) interval observes for a very broad range of locally Lipschitz bounded-error nonlinear CT and DT systems with \yo{noisy} nonlinear observation functions. Using \emph{remainder-form mixed-monotone decomposition functions} \cite{khajenejad2021tight}, we show that the states of the designed observer frame the true state trajectory of the system for all possible realizations of interval-valued noise/disturbance and initial states, i.e, the proposed observer is correct-by-construction, without imposing additional \yo{constraints or} assumptions. %, e.g., contraction or global Lipschitz continuity, the observer error system is by design cooperative (for CT systems) or positive (for DT systems). 
Further, \yo{we formulate semi-definite programs (SDPs) with LMI constraints for both CT and DT cases to minimize the $\mathcal{H}_{\infty}$-gain of the framer error (interval width) system and to ensure input-to-state stability of the correct-by-construction framers, which can be computed offline to find the stabilizing observer gains.}} % sufficient LMI conditions, for both CT and DT cases, are provided, and hence, by solving semi-definite programs, stabilizing observer gains can be computed offline.}    
 \section{Preliminaries}
 
 {\emph{{Notation}.}} $\mathbb{R}^n,\mathbb{R}^{n  \times p},\mathbb{D}_n,\mathbb{N},\mathbb{N}_n$ denote the $n$-dimensional Euclidean space and the sets of $n$ by $p$ matrices, $n$ by $n$ diagonal matrices, natural numbers and natural numbers up to $n$, respectively, while $\mathbb{M}_n$ %and $\tilde{\mathbb{M}}_n$ 
 denote\yong{s} the set of all $n$ by $n$ Metzler\footnote{A Metzler matrix is a square matrix in which all the off-diagonal components are nonnegative (equal to or greater than zero).} \mo{matrices}. %and M-matrices\footnote{An M-matrix is a square matrix in which all the off-diagonal components are nonpositive (equal to or lesser than zero).}, respectively. 
 For $M \in \mathbb{R}^{n \times p}$, $M_{ij}$ denotes $M$'s entry in the $i$'th row and the $j$'th column, $M^+\triangleq \max(M,\mathbf{0}_{n,p})$, $M^-=M^+-M$ and $|M|\triangleq M^++M^-$, where $\mathbf{0}_{n,p}$ is the zero matrix in $\mathbb{R}^{n \times p}$, \yong{while $\textstyle{\mathrm{sgn}}(M) \in \mathbb{R}^{n \times p}$ is the element-wise sign of $M$ with $\textstyle{\mathrm{sgn}}(M_{ij})=1$ if $M_{ij} \geq 0$ and $\textstyle{\mathrm{sgn}}(M_{ij})=-1$, otherwise.} %Furthermore, $\textstyle{\mathrm{sgn}}(M) \in \mathbb{R}^{n \times p}$, where $\textstyle{\mathrm{sgn}}(M)_{i,j}=1$ if $M_{i,j} \geq 0$ and $\textstyle{\mathrm{sgn}}(M)_{i,j}=0$, otherwise. 
 Further, if $p=n$, $M^\text{d}$ denotes a diagonal matrix whose diagonal coincides with the diagonal of $M$, $M^\text{nd} \triangleq M-M^\text{d}$ and $M^{\text{m}} \triangleq M^\text{d}+|M^\text{nd}|$, \mk{while $M \succ 0$ and $M \prec 0$ (or $M \succeq 0$ and $M \preceq 0$) denote that $M$ is positive and negative (semi-)definite, respectively}. \md{Finally, a function $f:S \subseteq \mathbb{R}^n \to \mathbb{R}$, where $0 \in S$, is positive definite if $f(x) >0$ for all $x \in S\yo{\, \setminus \{0\}}$, and $f(0)=0$.}
%\emph{Notation.} %We first summarize some notations used throughout the paper. 

Next, we introduce some useful definitions and results.
\begin{defn}[Interval, Maximal and Minimal Elements, Interval Width]\label{defn:interval}
{An (multi-dimensional) interval {$\mathcal{I} \triangleq [\underline{\md{z}},\overline{\md{z}}]  \subset %%\in 
\mathbb{R}^n$} is the set of all real vectors $\md{z \in \mathbb{R}^{n_z}}$ that satisfies $\underline{\md{z}} \le \md{z} \le \overline{\md{z}}$, where $\underline{\md{z}}$, $\overline{\md{z}}$ and $\|\overline{\md{z}}-\underline{\md{z}}\|\mk{_{\infty}\triangleq \max_{i \in \{1,\cdots,\md{n_z}\}}\md{|z_i|}}$ are called minimal vector, maximal vector and \mk{interval} width of $\mathcal{I}$, respectively}. \yong{An interval matrix can be defined similarly.} %, in an element-wise manner.}
\end{defn}
\begin{prop}\cite[Lemma 1]{efimov2013interval}\label{prop:bounding}
%Suppose , where . 
Let $A \in \mathbb{R}^{n \times p}$ and $\underline{x} \leq x \leq \overline{x} \in \mathbb{R}^n$. Then\moh{,} $A^+\underline{x}-A^{-}\overline{x} \leq Ax \leq A^+\overline{x}-A^{-}\underline{x}$. As a corollary, if $A$ is non-negative, $A\underline{x} \leq Ax \leq A\overline{x}$. % where $A^+,A^{-} \in \mathbb{R}^{m \times n}$, $A^+_{i,j}=A_{i,j}$ if $A_{i,j} \geq 0$, $A^+_{i,j}=0$ if $A_{i,j} < 0$ and $A^{-}=A^+-A$.
\end{prop}

%\begin{lem} \label{lem:tightness}
%Suppose the assumptions in Proposition \ref{prop:bounding} hold. Then, the returned bounds for $Ax$ is tight, in the sense that $\sup\limits_{\underline{x} \leq x \leq \overline{x}}Ax=A^+\overline{x}-A^{-}\underline{x}$ and $\inf\limits_{\underline{x} \leq x \leq \overline{x}}Ax=A^+\underline{x}-A^{-}\overline{x}$, where $\sup$ and $\inf$ are considered element-wise.
%\end{lem}
%\begin{proof}
%For $j \hspace{-.1cm}\in \hspace{-.1cm} \{1\dots m\}$, consider the problem of $\overline{s}_j=\max\limits_{\underline{x} \leq x \leq \overline{x}}[Ax]_j$, where $[Ax]_j = \sum_{i=1}^n A_{j,i}x_i$ is the $j$-th argument of the vector $Ax$. Obviously, the solutions of this program are $x^*_i=\overline{x}_i$ if $A_{j,i} \geq 0$, and $x^*_i=-\underline{x}_i$ if $A_{j,i} < 0, \forall i \in \{1\dots n \}$. Hence $\overline{s}_j =[A]^+_j\overline{x}-[A]^{-}_j \underline{x}$, where $[A]_j$ is the $j$-th row of $A$. Similarly, $\underline{s}_j=\min_{\underline{x} \leq x \leq \overline{x}}[Ax]_j=[A]^+_j\underline{x}-[A]^{-}_j \overline{x}$. The proof is complete, since %the facts 
 %$\sup_{\underline{x} \leq x \leq \overline{x}}Ax=[ \overline{s}_1  \dots \overline{s}_m ]^\top$ (similar for $\inf$).
%\end{proof}
\begin{defn}[Jacobian Sign-Stability] \label{defn:JSS}
A mapping $f :\md{\mathcal{Z}} \subset \mathbb{R}^\md{n_z} \to  \mathbb{R}^{p}$ is (generalized) Jacobian sign-stable (JSS), if its (generalized) Jacobian matrix entries \fa{do} not change signs on its domain, i.e., if either of the following hold: 
\begin{align*}
&\forall \md{z \in \mathcal{Z}}, \forall i \in \mathbb{N}_p,\forall j \in \mathbb{N}_{n_z} , J^f_{ij}(\md{z}) \geq 0 \ \text{(positive JSS)},\\  
&\forall \md{z \in \mathcal{Z}}, \forall i \in \mathbb{N}_p,\forall j \in \mathbb{N}_{n_z} , J^f_{ij}(\md{z}) \leq 0 \  \text{(negative JSS)},
\end{align*} 
where $J^f(\md{z})$ denotes the Jacobian matrix of $f$ at $\md{z \in \mathcal{Z}}$. 
\end{defn}
\begin{prop}[Jacobian Sign-Stable Decomposition, \md{\cite[Proposition 2]{moh2022intervalACC}}]\label{prop:JSS_decomp}
Let $f :\md{\mathcal{Z}} \subset \mathbb{R}^{n_z} \to  \mathbb{R}^{p}$ and suppose $\forall \md{z \in \mathcal{Z}}, J^f(\md{z}) \in [\underline{J}^f,\overline{J}^f]$, where $\underline{J}^f,\overline{J}^f$ are known matrices in $\mathbb{R}^{p \times n_z}$. Then, $f$ can be decomposed into a (remainder) affine mapping $H\md{z}$ and a JSS mapping $\mu (\cdot)$, in an additive form: 
\begin{align}\label{eq:JSS_decomp}
\forall \md{z \in \mathcal{Z}},f(\md{z})=\mu(\md{z})+H\md{z},
\end{align}
 where $H$ is a matrix in $\mathbb{R}^{p \times n_z}$, that satisfies the following 
 \begin{align}\label{eq:H_decomp}
 \forall (i,j) \in \mathbb{N}_p \times \mathbb{N}_{n_z}, H_{ij}=\overline{J}^f_{i,j} \ \lor  H_{ij}=\underline{J}^f_{ij}.    
 \end{align}
\end{prop}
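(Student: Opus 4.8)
The plan is to produce the decomposition by an explicit construction and then verify the two required properties — the additive identity \eqref{eq:JSS_decomp} and Jacobian sign-stability of the remainder — directly from the Jacobian containment hypothesis. Concretely, fix any constant matrix $H \in \mathbb{R}^{p \times n_z}$ whose entries obey \eqref{eq:H_decomp}, and \emph{define} the candidate remainder map by $\mu(z) \triangleq f(z) - Hz$ on $\mathcal{Z}$. Then \eqref{eq:JSS_decomp} holds by construction, $f$ being the sum of the affine term $Hz$ and $\mu$, so the whole content of the proposition reduces to the claim that this particular $\mu$ is (generalized) JSS.

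To establish that, I would first note that since $z \mapsto Hz$ is $C^1$, the (generalized) Jacobian of $\mu$ satisfies $J^\mu(z) = J^f(z) - H$ for all $z \in \mathcal{Z}$ — the ordinary sum rule when $f$ is differentiable, and, when $f$ is merely locally Lipschitz, the analogous statement $\partial \mu(z) = \partial f(z) - H$ for Clarke generalized Jacobians, which holds with equality precisely because the added map is smooth. Now fix an arbitrary index pair $(i,j) \in \mathbb{N}_p \times \mathbb{N}_{n_z}$ and split according to \eqref{eq:H_decomp}: if $H_{ij} = \overline{J}^f_{ij}$, then $J^\mu_{ij}(z) = J^f_{ij}(z) - \overline{J}^f_{ij} \le 0$ for every $z \in \mathcal{Z}$, since $J^f(z) \in [\underline{J}^f,\overline{J}^f]$; if instead $H_{ij} = \underline{J}^f_{ij}$, then symmetrically $J^\mu_{ij}(z) = J^f_{ij}(z) - \underline{J}^f_{ij} \ge 0$ on all of $\mathcal{Z}$. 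Either way the $(i,j)$ entry of $J^\mu$ keeps a constant sign over the domain, so $\mu$ is Jacobian sign-stable in the sense of Definition \ref{defn:JSS} (and one recovers the ``negative JSS'' or ``positive JSS'' alternative when the same branch of \eqref{eq:H_decomp} is used for every entry).

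This is essentially a one-line sign computation once $\mu = f - Hz$ is written down, so I do not anticipate a genuine obstacle; the only step that warrants a word of care is the differentiation identity $J^\mu = J^f - H$ in the nonsmooth case, which I would justify by appealing to the standard calculus of Clarke generalized Jacobians (Rademacher's theorem guarantees $J^f$ exists almost everywhere, and the sum rule is an equality here because $Hz$ is continuously differentiable). Everything else follows verbatim from Definition \ref{defn:JSS} and the hypothesis $J^f(z) \in [\underline{J}^f,\overline{J}^f]$.
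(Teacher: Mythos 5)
Your construction $\mu(z) \triangleq f(z) - Hz$ with $H$ chosen entrywise from $\{\underline{J}^f_{ij}, \overline{J}^f_{ij}\}$, followed by the observation that $J^\mu_{ij}(z) = J^f_{ij}(z) - H_{ij}$ has constant sign on $\mathcal{Z}$ by the Jacobian containment hypothesis, is exactly the argument behind the cited result, and your remark on the Clarke sum rule correctly covers the nonsmooth case. This is correct and takes essentially the same route as the paper's (referenced) proof.
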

\begin{defn}[Mixed-Monotonicity {and} Decomposition Functions] \cite[Definition 1]{abate2020tight},\cite[Definition 4]{yang2019sufficient} \label{defn:dec_func}
Consider the \md{bounded-error} dynamical system {with initial state $x_0 \in \mathcal{X}_0 \triangleq [\underline{x}_0,\overline{x}_0] \md{\subset \mathbb{R}^{n}}$} \md{and process noise $w_t \in \mathcal{W} \triangleq [\underline{w},\overline{w}] \subset \mathbb{R}^{n_w} $}:
\begin{align}\label{eq:mix_mon_def}
x_t^+= \md{g(z_t) \triangleq} g(x_t,\md{w_t}), \md{z_t \triangleq [x^\top_t \ w_t]^\top}
\end{align}
where $x_t^+ \triangleq x_{t+1}$ if \eqref{eq:mix_mon_def} is a DT {system} and $x_t^+ \triangleq \dot{x}_t$ if \eqref{eq:mix_mon_def} is a CT system. Moreover, ${g}:\md{\mathcal{Z}} \subset \mathbb{R}^\md{n_z} \to \mathbb{R}^{\md{n}}$ is the vector field 
%$z_t\triangleq [x_t^\top  \ w_t^\top]^\top \in \mathcal{Z} \triangleq \mathcal{X} \times \mathcal{U} \times \mathbb{I}\mathcal{W}$ 
with \md{augmented} state $\md{z_t} \in \md{\mathcal{Z} \triangleq \mathcal{X} \times \mathcal{W} \subset \mathbb{R}^{n_z}}$ \yong{as its domain}, \md{where $\mathcal{X}$ is the entire state space and  $n_z=n+n_w$}. %\triangleq [\underline{x},\overline{x}

Suppose \eqref{eq:mix_mon_def} is a DT system. 
Then, a mapping $g_d:\md{\mathcal{Z}\times \mathcal{Z}} \to \mathbb{R}^{p}$ is %called 
a {DT mixed-monotone} decomposition function with respect to $g$, %%if it is a discrete-time   mixed-monotone mapping with respect to $f(\cdot)$, i.e., 
if i) $g_d(\md{z,z})=g(\md{z})$, ii) $g_d$ is monotone increasing in its first 
%and monotone decreasing in its second 
argument, i.e., $\hat{\md{z}}\ge \md{z} \Rightarrow g_d(\hat{\md{z}},\md{z}') \geq g_d(\md{z,z}')$, and iii) {$g_d$ is monotone decreasing in its second argument, i.e., $\hat{\md{z}}\ge \md{z} \Rightarrow g_d(\md{z}',\hat{\md{z}}) \leq g_d(\md{z}',\md{z}).$}

%Further, 
\yong{On the other hand,} if \eqref{eq:mix_mon_def} is a CT system, %. Then, 
a mapping $g_d:\md{\mathcal{Z}\times \mathcal{Z}} \to \mathbb{R}^{p}$ is a {CT mixed-monotone} decomposition function with respect to $g$, %if it is a discrete-time   mixed-monotone mapping with respect to $f(\cdot)$, i.e., 
if i) $g_d(\md{z},\md{z})=g(\md{z})$, ii) $g_d$ is monotone increasing in its first 
%and monotone decreasing in its second 
argument with respect to ``off-diagonal'' arguments, i.e., $\forall (i,j) \in \mathbb{N}_{\md{n}} \times \mathbb{N}_{\md{n_z}} \land i \ne j,\md{\hat{z}}_j\ge \md{z}_j, \md{\hat{z}_i= z_i}  \Rightarrow g_{d,i}(\hat{\md{z}},\md{z}') \geq g_{d,i}(\md{z,z}')$, and iii) {$g_d$ is monotone decreasing in its second argument, i.e., $\hat{\md{z}}\ge \md{z} \Rightarrow g_d(\md{z}',\hat{\md{z}}) \leq g_d(\md{z}',\md{z}).$}   
\end{defn}
\begin{defn}[\md{Bounded-Error} Embedding {Systems}]\label{def:embedding}
For an $n$-dimensional %continuous-time 
system \eqref{eq:mix_mon_def} {with any decomposition functions ${g}_d(\cdot)$, its \emph{embedding system is the following} %the following 
$2n$-dimensional system {with initial condition $\begin{bmatrix} \overline{x}_0^\top & \underline{x}_0^\top\end{bmatrix}^\top$}:}
%\begin{align} \label{eq:embedding}
%\begin{bmatrix}{(\overline{x}}_t^+)^\top & ({\underline{x}}_t^+)^\top \end{bmatrix}^\top=\begin{bmatrix} {g}^\top_d(\overline{x}_t,\underline{x}_t) & {g}^\top_d(\underline{x}_t,\overline{x}_t)\end{bmatrix}^\top.
%\end{align}
\begin{align} \label{eq:embedding}
\md{\begin{bmatrix}{\underline{x}}_t^+ \\ {\overline{x}}_t^+ \end{bmatrix}=\begin{bmatrix}  \underline{g}_d(\begin{bmatrix}(\underline{x}_t)^\top \, \underline{w}^\top \end{bmatrix}^\top\hspace{-0.05cm},\begin{bmatrix}(\overline{x}_t)^\top \, \overline{w}^\top\end{bmatrix}^\top) \\  \overline{g}_d(\begin{bmatrix}(\overline{x}_t)^\top \, \overline{w}^\top\end{bmatrix}^\top\hspace{-0.05cm},\begin{bmatrix}(\underline{x}_t)^\top \, \underline{w}^\top \end{bmatrix}^\top) \end{bmatrix}.} 
%\begin{bmatrix} \overline{f}_d(\overline{x}_t,\underline{x}_t,u_t,\mo{\overline{w}},\mo{\underline{w}}) \\ \underline{f}_d(\underline{x}_t,\overline{x}_t,u_t,\mo{\underline{w}},\mo{\overline{w}}) \end{bmatrix}.} %,\ \underline{x}(0)=\underline{x}_0,\overline{x}(0)=\overline{x}_0.}
\end{align}
\end{defn}

\begin{prop}[State Framer Property]\label{cor:embedding} \cite[Proposition 3]{khajenejad2021tight}
{Let system \eqref{eq:mix_mon_def} with initial state $x_0 \in \mathcal{X}_0 \triangleq  [\underline{x}_0,\overline{x}_0]$ \md{and $w_t \in \mathcal{W}$} be mixed-monotone with an %its corresponding 
 embedding system \eqref{eq:embedding} with respect to %any one-sided decomposition functions 
${g}_d$. Then, for all $t \in \yong{\mathbb{T}}$, $R^g(t,\mathcal{X}_0,\md{\mathcal{W}}) \subset \mathcal{X}_t \triangleq [\underline{x}_t,\overline{x}_t]$, 
where $R^g(t,\mathcal{X}_0,\md{\mathcal{W}}) \triangleq
\{\mu_g(t, x_0,\md{{w}}) \mid x_0 \in \mathcal{X}_0,  \forall t \in \yong{\mathbb{T}},\md{\forall w \in \mathcal{W}}\}$ is the reachable set at time $t$ of \eqref{eq:mix_mon_def} 
when initialized within $\mathcal{X}_0$, $\mu_g(t,x_0,\md{{w}})$ is the true state trajectory function of system \eqref{eq:mix_mon_def} and $(\overline{x}_t,\underline{x}_t)$ is the solution to the embedding system \eqref{eq:embedding}, \yong{with $\mathbb{T} \in \mathbb{R}_{\ge 0}$ for CT systems and $\mathbb{T} \in \{0\} \cup \mathbb{N}$ for DT systems}. Consequently,
the system state trajectory $x_t=\mu_g(t,x_0,\md{{w}})$ %\triangleq \Phi(x_0,u,w,t)$ 
satisfies $\underline{x}_t \le x_t \le \overline{x}_t, {\forall t \geq 0}, \md{\forall w \in \mathcal{W}}$, i.e., \yong{it} is \emph{framed} by $\mathcal{X}_t \triangleq [\underline{x}_t,\overline{x}_t]$.}
\end{prop}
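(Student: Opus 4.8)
The plan is to establish the two-sided bound $\underline{x}_t \le x_t \le \overline{x}_t$ directly for an \emph{arbitrary} admissible trajectory — i.e. for any $x_0 \in \mathcal{X}_0$ and any disturbance realization with $w_t \in \mathcal{W}$ for all $t$ — and then to observe that, since the embedding solution $(\underline{x}_t,\overline{x}_t)$ depends only on the initial interval $\mathcal{X}_0$ and on the \emph{constant} extreme vectors $\underline{w},\overline{w}$ (not on the particular $x_0$ or disturbance), taking the union over all admissible trajectories yields $R^g(t,\mathcal{X}_0,\mathcal{W}) \subset [\underline{x}_t,\overline{x}_t]$. Throughout I write $z_t \triangleq [x_t^\top\ w_t^\top]^\top$, $\overline{z}_t \triangleq [\overline{x}_t^\top\ \overline{w}^\top]^\top$, $\underline{z}_t \triangleq [\underline{x}_t^\top\ \underline{w}^\top]^\top$, and introduce the framer errors $\overline{e}_t \triangleq \overline{x}_t - x_t$ and $\underline{e}_t \triangleq x_t - \underline{x}_t$; the goal is exactly $\overline{e}_t \ge 0$ and $\underline{e}_t \ge 0$ for all $t \in \mathbb{T}$, and we already have $\overline{e}_0,\underline{e}_0 \ge 0$ because $x_0 \in [\underline{x}_0,\overline{x}_0]$.

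For the DT case I would argue by induction on $t$. Assuming $\underline{z}_t \le z_t \le \overline{z}_t$, property (i) of Definition~\ref{defn:dec_func} gives $x_{t+1} = g(z_t) = g_d(z_t,z_t)$; then monotone increase in the first argument (property (ii)) with $z_t \le \overline{z}_t$, followed by monotone decrease in the second argument (property (iii)) with $z_t \ge \underline{z}_t$, produces the chain $x_{t+1} = g_d(z_t,z_t) \le g_d(\overline{z}_t,z_t) \le g_d(\overline{z}_t,\underline{z}_t) = \overline{x}_{t+1}$, i.e. $\overline{e}_{t+1} \ge 0$; exchanging the order of the two monotonicity steps gives $\underline{x}_{t+1} = g_d(\underline{z}_t,\overline{z}_t) \le g_d(\underline{z}_t,z_t) \le g_d(z_t,z_t) = x_{t+1}$, i.e. $\underline{e}_{t+1} \ge 0$, which together with $\underline{w} \le w_{t+1} \le \overline{w}$ closes the induction. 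If $\underline{g}_d \ne \overline{g}_d$, the same two chains apply verbatim with $\overline{g}_d$ used for the upper relation and $\underline{g}_d$ for the lower one.

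For the CT case I would invoke a Nagumo-type sub-tangentiality (invariant-set) argument for the nonnegative orthant applied to the stacked error $(\overline{e}_t^\top,\underline{e}_t^\top)^\top$: it suffices to verify that at any contact point where $\overline{e}_{t,i}=0$ while $\overline{e}_t \ge 0$ and $\underline{e}_t \ge 0$ (and, symmetrically, on $\{\underline{e}_{t,i}=0\}$), the corresponding velocity component is nonnegative. The decisive point is that $\overline{z}_t$ and $z_t$ agree in their $i$-th coordinate (since $\overline{e}_{t,i}=\overline{x}_{t,i}-x_{t,i}=0$) while $\overline{z}_t \ge z_t$ in every other coordinate — note that all $n_w$ noise coordinates carry indices $>n\ge i$, hence are off-diagonal relative to $i$ — so the CT ``off-diagonal'' increase of $g_d$ in its first argument (property (ii)), applied coordinate-by-coordinate, yields $g_{d,i}(z_t,z_t) \le g_{d,i}(\overline{z}_t,z_t)$, and then decrease in the second argument (property (iii)) with $z_t \ge \underline{z}_t$ yields $g_{d,i}(\overline{z}_t,z_t) \le g_{d,i}(\overline{z}_t,\underline{z}_t) = \overline{g}_{d,i}(\overline{z}_t,\underline{z}_t)$. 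Hence $\dot{\overline{e}}_{t,i} = \overline{g}_{d,i}(\overline{z}_t,\underline{z}_t) - g_i(z_t) = \overline{g}_{d,i}(\overline{z}_t,\underline{z}_t) - g_{d,i}(z_t,z_t) \ge 0$; the analogous computation (increase in the first argument, decrease in the second) gives $\dot{\underline{e}}_{t,i} \ge 0$ on $\{\underline{e}_{t,i}=0\}$. Invariance of the orthant then delivers $\overline{e}_t,\underline{e}_t \ge 0$ for all $t \ge 0$, and the reachable-set inclusion follows as described above.

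The step I expect to be the main obstacle is making the CT argument fully rigorous: under only local Lipschitz regularity one must first secure existence/uniqueness and non-escape of solutions to the embedding system \eqref{eq:embedding} on the relevant interval, and then justify that the appropriate one-sided derivative of a contact component of $\overline{e}_t$ is genuinely governed by the inequality above — this is the standard comparison/cooperativity machinery for monotone systems. Crucially, it is precisely the CT ``off-diagonal'' form of property (ii) (rather than full monotonicity in the first argument), combined with property (iii), that makes the boundary vector field of the error system point into the nonnegative orthant even when $w_t$ is an arbitrary interior point of $\mathcal{W}$ rather than one of its vertices.
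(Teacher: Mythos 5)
Your proof is correct, and the paper itself offers no proof of this proposition --- it is imported by citation from \cite[Proposition~3]{khajenejad2021tight}, where the argument is exactly the standard one you give: induction using properties (i)--(iii) of Definition~\ref{defn:dec_func} in the DT case, and positive-orthant invariance of the stacked framer-error dynamics via the off-diagonal monotonicity (a M\"uller/Kamke-type cooperativity argument) in the CT case. Your observation that the noise coordinates are always off-diagonal relative to the state index $i$, so that an arbitrary interior $w_t \in \mathcal{W}$ is handled by the same contact-point inequality, is precisely the point that makes the CT case go through.
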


\begin{prop}[Tight and Tractable Decomposition Functions for JSS Mappings, \md{\cite[Proposition 4]{moh2022intervalACC}}]\label{prop:tight_decomp}
Let $\mu:\md{\mathcal{Z}} \subset \mathbb{R}^{n_z} \to \mathbb{R}^p$ be a JSS mapping on its domain. Then, it admits a tight decomposition function that has the following form: %ordered pair $x_2 \leq x_1$: %$\lambda_d$ (cf. Proposition \ref{prop:mm_dec}) for any ordered pair $\underline{x}, \overline{x} \in \mathcal{X}, \underline{x} \leq \overline{x}$, can be tightly computed as follows:
\begin{align}\label{eq:JJ_decomp}
\forall i \in \mathbb{N}_p, \mu_{d,i}(\md{z}_1,\md{z}_2)\hspace{-.1cm}=\hspace{-.1cm}\mu_i(D^i\md{z}_1\hspace{-.1cm}+\hspace{-.1cm}(I_{n_z}\hspace{-.1cm}-\hspace{-.1cm}D^i)\md{z}_2), 
\end{align}
\md{for any {ordered} $\md{z_1, z_2 \in \mathcal{Z}}$}, where $D^i \in \mathbb{D}_\md{n_z}$ is a binary diagonal matrix determined by which vertex of the interval \md{$[{z}_2,{z}_1]$ \yo{that maximizes} \md{or $[z_1,z_2]$} \yo{that minimizes}} %that maximizes \md{(if $z_2 \leq z_1$) or minimizes \md{(if $z_2 > z_1$)}} 
the JSS function $\mu_i(\cdot)$ and \md{\yo{can be} computed} as follows:
\begin{align}\label{eq:Dj}
D^i=\textstyle{\mathrm{diag}}(\max(\textstyle{\mathrm{sgn}}(\yong{\overline{J}^{\mu}_i}),\mathbf{0}_{1,\md{n_z}})).
\end{align}
\end{prop}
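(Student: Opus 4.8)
The plan is to prove two claims: (a) the closed-form map $\mu_d$ defined by \eqref{eq:JJ_decomp}--\eqref{eq:Dj} is a legitimate (DT or CT) mixed-monotone decomposition function of $\mu$ in the sense of Definition~\ref{defn:dec_func}, and (b) it is in fact the \emph{tightest} such function, i.e. it attains the extremal values that characterize tight decomposition functions in \cite{abate2020tight}.

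First I would draw out the consequence of the JSS hypothesis. Fix a row index $i \in \mathbb{N}_p$. Sign-definiteness of the (generalized) Jacobian entries $J^\mu_{ij}(\cdot)$ on the box domain $\mathcal{Z}$ partitions the coordinates into $S_i^+ \triangleq \{ j \in \mathbb{N}_{n_z} : J^\mu_{ij}(z) \ge 0 \ \forall z \in \mathcal{Z}\}$ and its complement $S_i^-$ (on which $J^\mu_{ij} \le 0$). Convexity of $\mathcal{Z}$ together with a (nonsmooth) mean-value argument along axis-aligned segments then shows that $\mu_i$ is nondecreasing in $z_j$ for $j \in S_i^+$ and nonincreasing in $z_j$ for $j \in S_i^-$. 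I would also note that, under the convention $\mathrm{sgn}(0)=1$, the binary diagonal matrix $D^i$ of \eqref{eq:Dj} has $D^i_{jj}=1$ exactly on $S_i^+$ and $D^i_{jj}=0$ on $S_i^-$; the only ambiguous coordinates are those on which $\mu_i$ is locally constant, and there either assignment is harmless.

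Next I would verify the three defining properties of Definition~\ref{defn:dec_func}. Property (i) is immediate, since $D^i z + (I_{n_z}-D^i)z = z$, so $\mu_{d,i}(z,z)=\mu_i(z)$. For monotonicity in the first argument, if $\hat z_1 \ge z_1$ then $D^i \hat z_1 + (I_{n_z}-D^i)z_2$ and $D^i z_1 + (I_{n_z}-D^i)z_2$ differ only in the coordinates $j \in S_i^+$, where the former dominates, while $\mu_{d,i}$ is independent of the $S_i^-$ coordinates of $z_1$; since $\mu_i$ is nondecreasing precisely in the $S_i^+$ coordinates, $\mu_{d,i}(\hat z_1,z_2) \ge \mu_{d,i}(z_1,z_2)$. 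This gives the full (DT) monotonicity, and restricting the same reasoning to coordinates $j \neq i$ gives the weaker off-diagonal version required in the CT case. Monotonicity in the second argument is the mirror image, using that the two arguments then differ only in the $S_i^-$ coordinates, on which $\mu_i$ is nonincreasing. Hence $\mu_d$ is a valid decomposition function of $\mu$.

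Finally, for tightness I would invoke the characterization from \cite{abate2020tight}: the tight decomposition function returns $\min_{\zeta \in [z_1,z_2]} \mu_i(\zeta)$ when $z_1 \le z_2$ and $\max_{\zeta \in [z_2,z_1]} \mu_i(\zeta)$ when $z_1 \ge z_2$, and every decomposition function of $\mu$ is sandwiched by these two quantities in the respective ordered cases. Because $\mu_i$ is coordinatewise monotone, its maximum over a box $[a,b]$ with $a \le b$ is attained at the vertex $v$ with $v_j = b_j$ for $j \in S_i^+$ and $v_j = a_j$ for $j \in S_i^-$, and its minimum at the opposite vertex $v'$ with $v'_j = a_j$ for $j \in S_i^+$ and $v'_j = b_j$ for $j \in S_i^-$. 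Substituting $[a,b]=[z_2,z_1]$ in the case $z_1 \ge z_2$ gives $v = D^i z_1 + (I_{n_z}-D^i)z_2$, and substituting $[a,b]=[z_1,z_2]$ in the case $z_1 \le z_2$ gives $v' = D^i z_1 + (I_{n_z}-D^i)z_2$; in both cases $\mu_{d,i}(z_1,z_2)=\mu_i(D^i z_1 + (I_{n_z}-D^i)z_2)$ realizes the relevant extremum, so $\mu_d$ coincides with the tight decomposition function and has the claimed closed form. I expect the crux to be this last step --- pinning down precisely the notion of ``tight'' in play and carrying out the vertex-selection bookkeeping (including reconciling the $\mathrm{sgn}(0)$ convention with genuinely negative-JSS directions); the monotonicity verification is routine once coordinatewise monotonicity is established, apart from the mild care needed to run the mean-value argument with a generalized Jacobian.
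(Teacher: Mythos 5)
The paper offers no proof of this proposition---it is imported verbatim as \cite[Proposition 4]{moh2022intervalACC} (itself building on \cite{khajenejad2021tight})---but your argument is correct and is essentially the proof given in those sources: coordinatewise monotonicity of each $\mu_i$ obtained from the sign-stable Jacobian via a mean-value argument on the convex box domain, direct verification of the three properties in Definition~\ref{defn:dec_func}, and identification of the maximizing vertex of $[z_2,z_1]$ (resp.\ minimizing vertex of $[z_1,z_2]$) with $D^i z_1+(I_{n_z}-D^i)z_2$ to establish tightness. The one caveat you already flag is real but is an artifact of the stated formula rather than of your proof: with the paper's convention $\mathrm{sgn}(0)=1$, a coordinate with $\overline{J}^{\mu}_{ij}=0$ but $\underline{J}^{\mu}_{ij}<0$ (a genuinely nonincreasing, negative-JSS direction) gets $D^i_{jj}=1$ from \eqref{eq:Dj}, so one must read \eqref{eq:Dj} with the opposite tie-break on negative-JSS coordinates for the vertex selection---and hence the monotonicity verification---to go through.
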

%\begin{proof}
%First, by \cite[corollary 2]{khajenejad2021tight}, the JSS function $\mu(\cdot)$ admits a tight decomposition function. \yong{Then,} if \md{$z_2 \leq z_1$}, \yong{$\lambda(\cdot)$} can be computed by solving the following nonlinear optimization (cf. \cite[(7)]{khajenejad2021tight}): $\mu_{d,i}\md{(z_1,z_2)}=\max_{z \in [z_2,z_1]}\mu_i(z)$. Suppose that $z^* \in \md{[z_2,z_1]}$ is the maximizer  and consider the case that i) $\mu_i$ is positive JSS in dimension $j$. Obviously $z^*_j=\md{z}_{1,j}$. On the other hand, $J^\mu_{ij} \geq 0$, and so, \yong{$\max(\textstyle{\mathrm{sgn}}(J^\mu_{ij}),0)=1$}. Hence, by \eqref{eq:Dj}, $D^i_{jj}=1$, and therefore in \eqref{eq:JJ_decomp}, \md{$(D^i{z}_1+(I_{n_z}-D^i){z}_2)_j=z_{1,j}$}, which is consistent with what we obtained for the maximizer's $j$'th entry, i.e., $z^*_j=\md{z_{1,j}}$. Similar reasoning shows that such consistency \yong{also} holds in the negative JSS case as well \yong{as when finding the minimizer of $\mu_i(\cdot)$ if \md{$z_2>z_1$}}. 
%Further, %Note that in Proposition \ref{prop:tight_decomp}, 
%since $\mu(\cdot)$ is JSS, $J^{\mu}_i$ does not change signs %in each dimension 
%and hence, \yong{$\max(\textstyle{\mathrm{sgn}}(J^{\mu}_i),\mathbf{0}_{1,n_z})$} is well-defined \yong{and we can equivalently use $\max(\textstyle{\mathrm{sgn}}(J^{\mu}_i),\mathbf{0}_{1,n_z}) = \max(\textstyle{\mathrm{sgn}}(\overline{J}^{\mu}_i),\mathbf{0}_{1,n_z})$}. %So, we define $\textstyle{\mathrm{sgn}}(J_{\lambda,j}) \triangleq \textstyle{\mathrm{sgn}}(\overline{J}_{\lambda,j})$. 
%\end{proof}
\section{Problem Formulation} \label{sec:Problem}
%\vspace{-0.1cm}
\noindent\textbf{\emph{System Assumptions.}} 
Consider the following \yo{uncertain} nonlinear  continuous-time (CT) or discrete-time (DT) system:  
\begin{align} \label{eq:system}
\begin{array}{ll}
\mathcal{G}: \begin{cases} {x}_t^+ = \hat{f}(x_t,\md{w_t},u_t) \triangleq f(x_t,\md{w}_t),   \\
                                              y_t = \hat{h}(x_t,\md{v_t},u_t) \triangleq h(x_t,\md{v_t}), %\\
%                                              \  y_c\in g_c(x,u_c,\mathcal{V}), & x \in C, \\ 
%                                              \ y_d \in g_d(x,u_d,\mathcal{V}), & x \in D, 
                                              \end{cases} \hspace{-.2cm} x_t\in \mathcal{X}, t \in \yong{\mathbb{T}},
\end{array}\hspace{-0.2cm}
\end{align}
where $x_t^+=\dot{x}_t, \yong{\mathbb{T}} = \mathbb{R}_{\ge 0}$ if $\mathcal{G}$ is a CT and $x_t^+=x_{t+1}, \yong{\mathbb{T}}= \{0\}\cup \mathbb{N}$, if $\mathcal{G}$ is a DT system. Moreover, $x_t \in \mathcal{X} \subset \mathbb{R}^n$, \md{$w_t \in \mathcal{W} \triangleq [\underline{w},\overline{w}] \subset \mathbb{R}^{n_w},v_t \in \mathcal{V} \triangleq [\underline{v},\overline{v}] \subset \mathbb{R}^{n_v}$}, $u_t \in \mathbb{R}^s$ and $y_t \in \mathbb{R}^l$ are continuous state, \md{process noise, measurement disturbance,} known (control) input and output (measurement) signals. Furthermore, $\hat{f}:\mathbb{R}^n \md{\times \mathbb{R}^{n_w}} \times \mathbb{R}^s \to \mathbb{R}^n$ and $\hat{h}:\mathbb{R}^n \md{\times \mathbb{R}^{n_v}} \times \mathbb{R}^s \to \mathbb{R}^l$ are nonlinear state vector field and observation/constraint \yong{functions/}mappings, respectively, from which, %%as well as the fact that the input signal $u_t$ is known, 
the \yong{functions/}mappings $f:\mathbb{R}^n \md{\times \mathbb{R}^{n_w}} \to \mathbb{R}^n$ and $g:\mathbb{R}^n \md{\times \mathbb{R}^{n_v}} \to \mathbb{R}^l$ are well-defined \yong{since the input signal $u_t$ is known}.  
%$C$ and $D$ being known subsets of $\mathbb{R}^n$, $\mathcal{W} \triangleq [\underline{w},\overline{w}] \in \mathbb{IR}^{n_w}$ %and $\mathcal{V} \triangleq [\underline{v},\overline{v}] \in \mathbb{IR}^{n_v}$ 
%being interval set of possible process %and observation 
%noise signals, a state $x \in \mathbb{R}^n$, input $u \triangleq (u_c,u_d)$ being the collection of a continuous-time input $u_c:\mathbb{R}_{\ge 0} \to \mathbb{R}^{m_c}$ and a discrete-time input $u_d:\mathbb{N} \to \mathbb{R}^{m_d}$, and set-valued output/observation $\mathcal{Y}=\mathcal{Y}_c \times \mathcal{Y}_d \subseteq [\underline{y}_c,\overline{y}_c] \times [\underline{y}_d,\overline{y}_d] \subset \mathbb{R}^{l_c} \times \mathbb{R}^{l_d}$. The interval widths of the process noise and output/observation sets are defined as $\Delta w \triangleq \overline{w}-\underline{w}$, $\Delta y_c\triangleq \overline{y}_c-\underline{y}_c$ and $\Delta y_d\triangleq \overline{y}_d-\underline{y}_d$. %and $\Delta v \triangleq \overline{v}-\underline{v}$. 
We are interested in estimating the trajectories
of the plant $\mathcal{G}$ in \eqref{eq:system}, when they are initialized in an interval
$\mathcal{X}_0 \md{\triangleq [\underline{x}_0,\overline{x}_0]} \subset \mathcal{X} \subset \mathbb{R}^n$.
We 
also assume the following: % assumptions.
\begin{assumption} \label{ass:initial_interval}
 The initial state $x_0$ satisfies $x_0 \in \mathcal{X}_0 = [ \underline{x}_0,\overline{x}_0]$, where $\underline{x}_0$ and $\overline{x}_0$ {are} known initial state bounds. % $x(0,0) \in \mathcal{X}_0$. 
 %Moreover, $\underline{w},\overline{w},\underline{v},\overline{v}$ are known. 
 \end{assumption}
 \begin{assumption}\label{ass:mixed_monotonicity}
 The %vector fields 
 \yong{mappings} $f(\cdot)$ and $h(\cdot)$ are known, \mk{differentiable}, locally Lipschitz\mk{\footnote{\mk{Both assumptions of locally Lipschitz continuity and differentiability are mainly made for ease of exposition and can be relaxed to a much weaker continuity assumption (cf. \cite{khajenejad2021tight} for more details).}}} and mixed-monotone in their domain with \emph{a priori} known upper and lower bounds for their Jacobian matrices, $\overline{J}^{f},\underline{J}^{f} \in \mathbb{R}^{n \times \md{n_z}}$ and $\overline{J}^{h},\underline{J}^{h} \in \mathbb{R}^{l \times \md{n_{\zeta}}}$, respectively, \md{where $n_z=n+n_w$ and $n_{\zeta}=n+n_v$}.%, bounded 
\end{assumption}
\begin{assumption}\label{ass:known_input_output}
\md{$\mathcal{X}_0, \mathcal{W},\mathcal{V} $ and} the values of the input $u_t$ and output/measurement $y_t$ signals are known\yo{/given} at all times. %during flows and/or $y_d$ at the jumps are known. Moreover, the jump times $t_j$ are known (exactly or with bounded delay, similar to \cite[Section 6]{bernard:hal-02187411}). 
\end{assumption}
%\begin{assumption}\label{ass:info}
%Some information about the flow time between successive jumps is available, i.e., one of the conditions in \cite[Definition 2.1]{bernard:hal-02187411} holds. [[We will reproduce \cite[Definition 2.1]{bernard:hal-02187411} here.]] %In other words, $\mathcal{C}_{\mathcal{H}_u}(\mathcal{X}_0,\mathcal{I})$ holds, where $\mathcal{I}$ is a closed subset $\mathcal{I}$ of $\mathbb{R}_{\ge 0}$ and $\mathcal{C}_{\mathcal{H}_u}(\mathcal{X}_0,\mathcal{I})$ is defined in a similar manner to \cite[Definition 2.1]{bernard2018observers} and \cite[Definition 2.1]{bernard:hal-02187411}.
%\end{assumption}
Further, we formally define the notions of \emph{framers}, \emph{correctness} and \emph{stability} that are used throughout the paper. 
\begin{defn}[Correct Interval \mk{Framers}]\label{defn:framers}
Suppose Assumptions \ref{ass:mixed_monotonicity} \md{and} \ref{ass:known_input_output} hold. Given the nonlinear plant \eqref{eq:system}, %with a time domain $E \triangleq \bigcup_{j=0}^{J-1}([t_j,t_{j+1}],j)$, 
the mappings/signals $\overline{x},\underline{x}: \mo{\mathbb{T}} \to \mathbb{R}^n$ are called upper and lower framers for the states of System \eqref{eq:system}, if 
\begin{align}\label{eq:correctness}
\forall t \in \yong{\mathbb{T}}, \md{\forall w_t \in \mathcal{W},\forall v_t \in \mathcal{V}}, \ \underline{x}_t \leq x_t \leq \overline{x}_t.
\end{align}
% where $T \triangleq \{0\} \cup \mathbb{N}$ and $x(t) \triangleq x_t$ if $\mathcal{G}$ is a discrete-time system and $E\triangleq \mathbb{R}_+$ if $\mathcal{G}$ is a continuous-time system. 
In other words, starting from the initial interval $\underline{x}_0 \leq x_0 \leq \overline{x}_0$, the true state of the system in \eqref{eq:system}, $x_t$, is guaranteed to evolve within the interval flow-pipe $[\underline{x}_t,\overline{x}_t]$, for all $(t,\md{w_t,v_t)} \in \md{\mathbb{T} \times \mathcal{W} \times \mathcal{V}}$. Finally, any dynamical system whose states are correct framers for the states of the plant $\mathcal{G}$, i.e., any (tractable) algorithm that returns upper and lower framers for the states of plant $\mathcal{G}$ is called a \emph{correct} interval \mk{framer} for system \eqref{eq:system}. 
\end{defn}
\begin{defn}[\mk{Framer} Error]\label{defn:error}
Given %a CHIO returning 
state framers \mk{$\underline{x}_t \leq \overline{x}_t$}, $\varepsilon : \mo{\mathbb{T}} \to \mathbb{R}^n$, \md{denoting} the interval width \mk{of $[\underline{x}_t,\overline{x}_t]$ (cf. Definition \ref{defn:interval})},  
% defined as $\varepsilon_t \triangleq \overline{x}_t-\underline{x}_t$, 
 is called the \mk{framer} error. It can be easily verified that %with defining $\varepsilon_0 \triangleq \overline{x}_0-\underline{x}_0$, 
correctness (cf. Definition \ref{defn:framers}) implies that $\varepsilon_t \geq 0, \forall t \in \mo{\mathbb{T}}.$  
\end{defn}
\begin{defn}[\md{Input-to-State} Stability and \mk{Interval Observer}]\label{defn:stability}
An interval \mk{framer} is \md{input-to-state} stable \md{(ISS)}, if the \mk{framer} error (cf. Definition \ref{defn:error}) \md{is bounded as follows: 
\begin{align}
\forall t \in \mathbb{T}, \|\varepsilon_t\|_{2} \leq \beta(\|\varepsilon_0\|_{2},t)+\rho(\|\Delta\|_{\ell_\infty}),
\end{align} 
where $\Delta \triangleq [\Delta w^\top \ \Delta v^\top]^\top \triangleq [(\overline{w}-\underline{w})^\top \ (\overline{v}-\underline{v})^\top]^\top$,
%$\Delta \triangleq \max\{\|\Delta w \|_{\infty},\|\Delta v \|_{\infty}\}$, 
$\beta$ and $\rho$ are functions of classes\footnote{\md{A function $\alpha: \mathbb{R}_+ \to \mathbb{R}_+$ is of class $\mathcal{K}$ if it is continuous, positive definite, and strictly increasing and is of class $\mathcal{K}_{\infty}$ if it is also unbounded. Moreover, $\lambda : \mathbb{R}_+ \to \mathbb{R}_+$ is of class $\mathcal{KL}$ if for each fixed $t\geq 0$, $\lambda(\cdot,t)$ is of class $\mathcal{K}$ and for each fixed $s \geq 0$, $\lambda(s,t)$ decreases to zero as $t \to \infty.$}} $\mathcal{KL}$ and $\mathcal{K}_{\infty}$, respectively, and $\|\Delta\|_{\ell_{\infty}} \triangleq \sup_{t \in [0,\infty]}\|\Delta_t\|_2=\|\Delta\|_2$ is the $\ell_{\infty}$ signal norm.}
%asymptotically converges to \md{some finite steady state value}, i.e., 
%\begin{align}
%\forall (t,j) \in E, \|e(t,j)\| \leq \beta(\|\varepsilon_0\|,t+j)+\rho(\Delta),
%$\md{\varepsilon_{\infty}(\mathcal{X}_0,\mathcal{W},\mathcal{V}) \triangleq \lim_{t \to \infty} \|\varepsilon _t\|_{\infty} <\infty}$. 
\md{An ISS interval framer is called an interval observer.}
%\end{align}
%where $\|\varepsilon _t\|_{\tilde{p}}$ is $ \{1,2,\infty,\dots\}\ni \tilde{p}$-norm of the signal/mapping $\varepsilon _t$.
% for all $(t, j) \in \dom e$, 
%where $\Delta \triangleq \max\{\|\Delta w \|_{\infty},\|\Delta y_c \|_{\infty},\|\Delta y_d \|_{\infty}\}$, $\beta \in \mathcal{KL}$ and $\rho \in \mathcal{K}_{\infty}$.
\end{defn}
\md{
\begin{defn}[$\mathcal{H}_{\infty}$-Optimal Interval Observer Synthesis]\label{defn:H_inf}
An interval framer design $\hat{\mathcal{G}}$ is $\mathcal{H}_{\infty}$-optimal %among a class of observers, 
if the $L_{\infty}$ gain of the framer error system $\tilde{\mathcal{G}}$, i.e., $\|\tilde{\mathcal{G}}\|_{L_{\infty}}$ is minimized, where 
%with observer gain $L^*$ is a solution of the following problem:
%minimizes $\gamma$ between all the observers belonging to that class, where 
\begin{align}\label{eq:H_inf_Def}
\|\tilde{\mathcal{G}}\|_{L_{\infty}} \triangleq \sup {\{}\frac{\|\varepsilon\|_{\ell_2}}{\|\Delta\|_{\ell_2}},\Delta \ne 0{\}}, %\ \text{and}
\end{align}
and $\|s\|_{\ell_2} \triangleq \sqrt{\int_{0}^{\infty}\|s_t\|_2^2dt}$ is the $\ell_2$ signal norm for $s \in \{\varepsilon,\Delta\}$.
\end{defn}}
The %Simultaneous Input and State Interval Observer (SISIO) 
observer design problem %, addressed in this paper, 
 can be stated as follows:%\newline
\begin{problem}\label{prob:SISIO}
Given the nonlinear system in \eqref{eq:system}, as well as Assumptions \md{\ref{ass:mixed_monotonicity} and \ref{ass:known_input_output}}, %and Definitions \ref{defn:hybrid_framers}--\ref{defn:stability}, 
synthesize \md{an ISS and \md{$\mathcal{H}_{\infty}$-optimal} interval observer (cf. Definitions \ref{defn:framers}--\ref{defn:H_inf})}. %\mk{i.e., a} correct and stable \mk{framer} (cf. Definitions \ref{defn:framers} and \ref{defn:stability}). 
\end{problem}

\section{Proposed Interval Observer} \label{sec:observer}
\subsection{Interval Observer Design} \label{sec:obsv}
Given the nonlinear plant $\mathcal{G}$, in order to address Problem \ref{prob:SISIO}, we propose an interval observer (cf. Definition \ref{defn:framers}) for $\mathcal{G}$ through the following dynamical system \md{$\hat{\mathcal{G}}$}:
\md{
\begin{align}\label{eq:observer}
%\hat{\mathcal{G}}\hspace{-.1cm}:\hspace{-.1cm}
%\begin{cases}
\begin{array}{rlll}
\underline{x}_t^+&\hspace{-.2cm}=\hspace{-.1cm}(A-LC)^\uparrow \underline{x}_t\hspace{-.1cm}-\hspace{-.1cm}(A\hspace{-.1cm}-\hspace{-.1cm}LC)^\downarrow \overline{x}_t
\hspace{-.1cm}+\hspace{-.1cm}\phi_d(\underline{\md{x}}_t,\underline{w},\overline{\md{x}}_t,\overline{w}) \\
&\hspace{-.2cm}-L^+\psi_d(\overline{\md{x}}_t,\overline{v},\underline{\md{x}}_t,\underline{v})+L^-\psi_d(\underline{\md{x}}_t,\underline{v},\overline{\md{x}}_t,\overline{v})+Ly_t \\
&\hspace{-.2cm}+B^+\underline{w}-B^-\overline{w}+(LD)^-\underline{v}-(LD)^+\overline{v},\\
\overline{x}_t^+&\hspace{-.2cm}=\hspace{-.1cm}(A\hspace{-.1cm}-\hspace{-.1cm}LC)^\uparrow \overline{x}_t\hspace{-.1cm}-\hspace{-.1cm}(A-LC)^\downarrow \underline{x}_t
\hspace{-.1cm}+\hspace{-.1cm}\phi_d(\overline{\md{x}}_t,\overline{w},\underline{\md{x}}_t,\underline{w}) \\
&\hspace{-.2cm}-L^+\psi_d(\underline{\md{x}}_t,\underline{v},\overline{\md{x}}_t,\overline{v})+L^-\psi_d(\overline{\md{x}}_t,\overline{v},\underline{\md{x}}_t,\underline{v})+Ly_t \\
&\hspace{-.2cm}+B^+\overline{w}-B^-\underline{w}+(LD)^-\overline{v}-(LD)^+\underline{v}, 
 \end{array}
% \end{cases},
\end{align}
}\vspace{-0.2cm}

\noindent where if $\mathcal{G}$ is a CT system, then
\begin{align}\label{eq:T_CT}
\begin{array}{rl}
\overline{x}_t^+\hspace{-.2cm}&\triangleq \dot{\overline{x}_t},(A-LC)^\uparrow \triangleq (A-LC)^\text{d}+(A-LC)^{\text{nd}+},\\
\underline{x}_t^+\hspace{-.2cm}&\triangleq \dot{\underline{x}_t},(A-LC)^\downarrow \triangleq (A-LC)^{\text{nd}-},
\end{array}
\end{align}
and if $\mathcal{G}$ is a DT system, then
\begin{align}\label{eq:T_DT}
\begin{array}{rl}
\overline{x}_t^+&\triangleq \overline{x}_{t+1},(A-LC)^\uparrow \triangleq (A-LC)^+,\\
\underline{x}_t^+&\triangleq \underline{x}_{t+1},(A-LC)^\downarrow \triangleq (A-LC)^{-}.
\end{array}
\end{align}
Moreover, %\md{$\overline{z}_t \triangleq [\overline{x}^\top_t \ \overline{w}^\top]^\top,\underline{z}_t \triangleq [\underline{x}^\top_t \ \underline{w}^\top]^\top,\overline{\zeta}_t \triangleq [\overline{x}^\top_t \ \overline{v}^\top]^\top,\underline{\zeta}_t \triangleq [\underline{x}^\top_t \ \underline{v}^\top]^\top$ and} 
$A \in \mathbb{R}^{n \times n}$, \md{$B \in \mathbb{R}^{n \times n_w}$}, $C \in \mathbb{R}^{l \times n}$ \md{and $D \in \mathbb{R}^ {l \times n_v}$} are chosen such that the following decompositions hold (cf. Definition \ref{defn:JSS} and Proposition \ref{prop:JSS_decomp}): \md{$\forall x,w,v \in \mathcal{X} \times \mathcal{W} \times \mathcal{V},$}
\vspace{-.2cm}
\md{
\begin{align} \label{eq:JSS_decom}
%\begin{array}{lr}
\begin{cases}
f(x,w)=Ax+Bw+\phi(x,w), \\ h(x,v)=Cx+Dv+\psi(x,v), \end{cases} \hspace{-.4cm} s.t. \ \phi,\psi \ \text{are JSS}. %\ \mathcal{X}.
%\end{array}
\end{align} 
} \vspace{-0.2cm}

\noindent Furthermore, $\phi_d:\mathbb{R}^\md{2n_z} \to \mathbb{R}^n$ and $\psi_d:\mathbb{R}^\md{2n_{\zeta}}  \to \mathbb{R}^l$ are tight mixed-monotone decomposition functions of $\phi$ and $\psi$, respectively (cf. Definition \ref{defn:dec_func} and Propositions \ref{cor:embedding}--\ref{prop:tight_decomp}). Finally, $L \in \mathbb{R}^{n \times l}$ is the observer gain matrix, designed via Theorem \ref{thm:stability}, such that the proposed observer $\hat{\mathcal{G}}$ possesses the desired properties discussed in the following subsections.
\subsection{Observer Correctness (Framer Property)}\vspace{-0.05cm}
Our strategy is to design a \emph{correct by construction} interval observer for plant $\mathcal{G}$. To accomplish this goal, first, note that from \eqref{eq:observer} and \eqref{eq:JSS_decom}
%the observation equation and the decomposition of the observation function $h(\cdot)$ in \eqref{eq:JSS_decom}, 
we have $y_t-Cx_t\md{-Dv_t}-\psi(x_t,\md{v_t})=0$, and so $L(y_t-Cx_t\md{-Dv_t}-\psi(x_t,\md{v_t}))=0$, for any $L \in \mathbb{R}^{n \times l}$. Adding this ``zero" term to the right hand side of \eqref{eq:system} \yong{and applying} %, together with 
\eqref{eq:JSS_decom}
%as well as considering the decomposition of the state vector $f(\cdot)$ in \eqref{eq:JSS_decom} 
yield the following equivalent system to $\mathcal{G}$:  
\begin{align} \label{eq:eqiv_sys}
\hspace{-.25cm}x_t^+\hspace{-.15cm}=\hspace{-.1cm}(\hspace{-.05cm}A\hspace{-.1cm}-\hspace{-.1cm}LC\hspace{-.05cm})x_t\hspace{-.1cm}+\hspace{-.1cm}\md{Bw_t\hspace{-.1cm}-\hspace{-.1cm}LDv_t}\hspace{-.1cm}+\hspace{-.1cm}\phi(\hspace{-.05cm}x_t,\hspace{-.05cm}\md{w_t}\hspace{-.05cm})\hspace{-.1cm}-\hspace{-.15cm}L\psi(x_t,\md{v_t}\hspace{-.05cm})\hspace{-.1cm}+\hspace{-.15cm}Ly_t
\end{align}
From now on, we are interested in computing embedding systems, in the sense of Definition \ref{def:embedding}, for the system in \eqref{eq:eqiv_sys}, so that by Proposition \ref{cor:embedding}, the state trajectories of \eqref{eq:eqiv_sys} are ``framed" by the state trajectories of the computed embedding system. To do so, we split the right hand side of \eqref{eq:eqiv_sys} (except for $L\md{y_t}$ that is \yo{a known signal}) % independent \yong{of the states} \md{and any noise/uncertanity}) 
into two constituent systems: the \md{affine} constituent $(A-LC)x_t\md{+Bw_t-LDv_t}$, and the nonlinear constituent, $\phi(x_t,\md{w_t})-L\psi(x_t,\md{v_t})$. Then, we compute embedding systems for each constituent, separately. Finally, we add the computed embedding systems to construct an embedding system for \eqref{eq:eqiv_sys}. We start with framing the \md{affine} constituent through the following lemma.
\begin{lem}[\md{Affine} Embedding]\label{lem:linear_bounding}
Consider a dynamical system ${\mathcal{G}}_{\ell}$ in the form of \eqref{eq:mix_mon_def}, with %\md{the sate space} $\mathcal{X}$, \md{noise sets $\mathcal{W} \triangleq [\underline{w},\overline{w}],\mathcal{V} \triangleq [\underline{v},\overline{v}]$} and 
state equation ${f}_{\ell}(\md{\xi_t})=(A-LC)x_t\md{+Bw_t-LDv_t}$ \md{(with $\xi \triangleq [x^\top \ w^\top \ v^\top]^\top$)}. Then, a tight decomposition function (cf. Definition \ref{defn:dec_func}) for ${\mathcal{G}}_{\ell}$ \md{and any ordered pair of $\xi_1,\xi_2$} can be computed as follows:
%If $\tilde{\mathcal{G}}$ is a DT system, then 
\md{
\begin{align} \label{eq:Lin_dec}
\begin{array}{rl}
&{f}_{\ell d}(\md{\xi_1,\xi_2})=(A-LC)^\uparrow x_1-(A-LC)^\downarrow x_2 \\
                                                  &+B^+w_1-B^-w_2+(LD)^-v_1-(LD)^+v_2,
\end{array}
\end{align}
}\vspace{-0.2cm}
%and if $\tilde{\mathcal{G}}$ is a CT system, then
%\begin{align} \label{eq:CT_Lin_dec}
%g_d(x_1,x_2)=(A-LC)^+x_1-(A-LC)^-x_2, \ \text{(DT)}
%\end{align}

\noindent where $(A-LC)^\uparrow $ and $(A-LC)^\downarrow$ are given in \eqref{eq:T_CT} and \eqref{eq:T_DT} for CT and DT systems, respectively. 
\end{lem}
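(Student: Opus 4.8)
Any affine map is mixed-monotone, so the substance of the lemma is that \eqref{eq:Lin_dec} is an \emph{explicit} and \emph{tight} such decomposition. I would prove this in two parts: (I) check that $f_{\ell d}$ satisfies the three requirements of Definition~\ref{defn:dec_func} in both the DT and CT variants; (II) establish tightness using Proposition~\ref{prop:bounding}.

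For part (I), I would first verify the consistency property $f_{\ell d}(\xi,\xi)=f_\ell(\xi)$. In the DT case \eqref{eq:T_DT}, this is the identity $M=M^+-M^-$ applied blockwise to $A-LC$, to $B$, and to $-LD$ (using $(-LD)^+=(LD)^-$ and $(-LD)^-=(LD)^+$); in the CT case \eqref{eq:T_CT}, only the $A-LC$ block changes and one instead uses $(A-LC)^\uparrow-(A-LC)^\downarrow=(A-LC)^{\text{d}}+(A-LC)^{\text{nd}+}-(A-LC)^{\text{nd}-}=(A-LC)^{\text{d}}+(A-LC)^{\text{nd}}=A-LC$. Next, since $f_{\ell d}$ is affine in $(\xi_1,\xi_2)$, its monotonicity in each argument is governed entirely by the signs of the coefficient matrices: the coefficients of $\xi_1=[x_1^\top\ w_1^\top\ v_1^\top]^\top$ are $(A-LC)^\uparrow$, $B^+$, $(LD)^-$, and those of $\xi_2$ are $-(A-LC)^\downarrow$, $-B^-$, $-(LD)^+$. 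Since $B^+,(LD)^-$ are entrywise nonnegative and $-B^-,-(LD)^+$ entrywise nonpositive --- and the $w,v$ indices are distinct from the state indices, hence never ``diagonal'' --- requirements (ii) and (iii) hold for these blocks in both settings. For the $A-LC$ block: in DT, $(A-LC)^\uparrow=(A-LC)^+$ and $-(A-LC)^\downarrow=-(A-LC)^-$ are entrywise nonnegative/nonpositive, giving full monotonicity; in CT, $(A-LC)^\uparrow=(A-LC)^{\text{d}}+(A-LC)^{\text{nd}+}$ has nonnegative off-diagonal entries and $-(A-LC)^\downarrow=-(A-LC)^{\text{nd}-}$ is entrywise nonpositive, which is exactly the weaker off-diagonal form of (ii) and the full form of (iii) required by the CT definition.

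For part (II), I would apply Proposition~\ref{prop:bounding} separately to $(A-LC)x$, $Bw$, and $-LDv$ over $\xi\in[\underline{\xi},\overline{\xi}]$: the bounds it produces are attained at box vertices and coincide, after collecting terms, with $f_{\ell d}(\underline{\xi},\overline{\xi})$ (the lower one) and $f_{\ell d}(\overline{\xi},\underline{\xi})$ (the upper one). Hence $[f_{\ell d}(\underline{\xi},\overline{\xi}),\,f_{\ell d}(\overline{\xi},\underline{\xi})]$ is exactly the interval hull of $\{f_\ell(\xi):\xi\in[\underline{\xi},\overline{\xi}]\}$, so no decomposition function can be tighter, which is the definition of tightness.

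The step I expect to be most delicate is the CT bookkeeping: one must confirm that retaining the full diagonal of $A-LC$ inside $(A-LC)^\uparrow$ (rather than only its positive part, as in DT) neither breaks consistency (it does not, by the identity above) nor loses tightness --- the point being that in the CT notion the $i$-th coordinate's extremization is over the box with coordinate $i$ frozen, so the diagonal term contributes identically to both framers and drops out of the optimization, leaving the off-diagonal, $B$, and $LD$ contributions, which are exactly those handled in part (II).
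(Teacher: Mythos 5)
Your proof is correct and follows essentially the same route as the paper: the paper's proof simply cites the noiseless version of this lemma from the authors' earlier work for the $(A-LC)x$ block and then verifies exactly the sign/monotonicity facts you check for the $B^+w_1-B^-w_2$ and $(LD)^-v_1-(LD)^+v_2$ blocks. Your write-up is just a self-contained version of that argument (with the consistency identity $M=M^+-M^-$, the block-wise coefficient-sign check in both the DT and CT variants, and the tightness via Proposition~\ref{prop:bounding} made explicit rather than deferred to the citation).
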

\begin{proof} \md{The proof is similar to the proof of \cite[Lemma 1]{moh2022intervalACC}, with the slight modification of considering the extra noise terms $B^+w_1-B^-w_2$ that is non-decreasing in $w_1$ and non-\yo{in}creasing in $w_2$ due to the non-negativity of $B^+$ and $B^-$, as well as $(LD)^-v_1-(LD)^+v_2$ that is non-decreasing in $v_1$ and non-\yo{in}creasing in $v_2$ due to the non-negativity of $(LD)^+$ and $(LD)^-$.}
%We start with the DT case, where $(A-LC)^\uparrow \triangleq (A-LC)^+,(A-LC)^\downarrow \triangleq(A-LC)^-$. It is easy to verify that ${\mo{f}}_{\ell d}$ is increasing in $\md{\xi_1}$ since $(A-LC)^+\md{,B^+,(LD)^-} \geq 0$, is decreasing in $\md{\xi_2}$ since $-(A-LC)^-\md{,-B^-,-(LD)^+} \leq 0$, and ${\mo{f}}_{\ell d}(\md{\xi,\xi})=((A-LC)^+-(A-LC)^-)x\md{+(B^+-B^-)w+((LD)^--(LD)^+)v}=(A-LC)x\md{+Bw-LDv}={\mo{f}}_{\ell}(x\md{,w,v})\md{=f_{\ell}(\xi)}$. Hence, ${\mo{f}}_{\ell d}$ is a DT decomposition function of ${\mo{f}_{\ell}}$. The proof for tightness goes through similar lines of the proof of \cite[Lemma 1]{khajenejad2021simultaneousECC}. As for the CT case, where $(A-LC)^\uparrow \triangleq (A-LC)^{\text{d}}+(A-LC)^{\text{nd}+},(A-LC)^\downarrow \triangleq(A-LC)^{\text{nd}-}$, the proof is similar to the one for the DT case, with the slight difference that in the CT case, we need increasing monotonicity of ${\mo{f}}_{\ell d}$ only in off-diagonal elements of $x_1$ (cf. Definition \ref{defn:dec_func}), which is guaranteed by non-negativity of $(A-LC)^{\text{nd}+}$.         
\end{proof}   
Next, we compute an embedding system for the nonlinear constituent system %in \eqref{eq:eqiv_sys}, i.e., 
$\phi(x_t,\md{w_t})-L\psi(x_t,\md{v_t})$, as follows.
\begin{lem}[Nonlinear Embedding]\label{lem:nonlinear_bounding}
Consider a dynamical system ${\mathcal{G}}_{{\nu}}$ in the form of \eqref{eq:mix_mon_def}, with domain $\mathcal{X}$ and state equation ${f}_{\nu}(x_t,\md{w_t},\md{v_t})=\phi(x_t,\md{w_t})-L\psi(x_t,\md{v_t})$. Then, a decomposition function (cf. Definition \ref{defn:dec_func}) for ${\mathcal{G}}_{\nu}$ can be computed as follows \md{(with $\xi \triangleq [x^\top \ w^\top \ v^\top]^\top$)}:
\md{
\begin{align}\label{eq:Lin_dec}
\begin{array}{rl}
&{f}_{\nu d}(x_1,w_1,v_1,x_2,w_2,v_2)=\phi_d({x}_1,w_1,{x}_2,w_2)\\
&-L^+\psi_d({x}_2,v_2,{x}_1,v_1)+L^-\psi_d({x}_1,v_1,{x}_2,v_2),
\end{array}
\end{align} 
}\vspace{-0.2cm}

\noindent where $\phi_d,\psi_d$ are tight decomposition functions for the JSS mapping $\phi,\psi$, computed via Proposition \ref{prop:tight_decomp}. 
\end{lem}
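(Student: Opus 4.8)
The plan is to verify directly that the candidate map $f_{\nu d}$ in \eqref{eq:Lin_dec} satisfies the three defining properties of a mixed-monotone decomposition function from Definition \ref{defn:dec_func}, exploiting the fact that $\phi_d$ and $\psi_d$ are (tight) decomposition functions of the JSS mappings $\phi$ and $\psi$ provided by Proposition \ref{prop:tight_decomp}, together with the sign structure $L = L^+ - L^-$ with $L^+,L^- \geq 0$ entrywise. First I would check the diagonal/consistency property: setting $(x_1,w_1,v_1) = (x_2,w_2,v_2) = (x,w,v)$ gives $f_{\nu d} = \phi_d(x,w,x,w) - L^+\psi_d(x,v,x,v) + L^-\psi_d(x,v,x,v) = \phi(x,w) - (L^+ - L^-)\psi(x,v) = \phi(x,w) - L\psi(x,v) = f_\nu(x,w,v)$, using property (i) of $\phi_d$ and $\psi_d$.

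Next I would establish monotonicity in the first argument bundle $(x_1,w_1,v_1)$. The term $\phi_d(x_1,w_1,x_2,w_2)$ is monotone increasing in $(x_1,w_1)$ by property (ii) of $\phi_d$. For the $\psi$ terms: $-L^+\psi_d(x_2,v_2,x_1,v_1)$ is increasing in $(x_1,v_1)$ because $\psi_d$ is \emph{decreasing} in its second argument (property (iii)) and $L^+ \geq 0$ flips the inequality back; similarly $L^-\psi_d(x_1,v_1,x_2,v_2)$ is increasing in $(x_1,v_1)$ because $\psi_d$ is increasing in its first argument (property (ii)) and $L^- \geq 0$ preserves the inequality. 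Summing, $f_{\nu d}$ is monotone increasing in $(x_1,w_1,v_1)$. The argument for monotone \emph{decrease} in the second bundle $(x_2,w_2,v_2)$ is entirely symmetric: $\phi_d(x_1,w_1,x_2,w_2)$ decreases in $(x_2,w_2)$ by property (iii) of $\phi_d$; $-L^+\psi_d(x_2,v_2,x_1,v_1)$ decreases in $(x_2,v_2)$ since $\psi_d$ increases in its first argument and $-L^+ \leq 0$; and $L^-\psi_d(x_1,v_1,x_2,v_2)$ decreases in $(x_2,v_2)$ since $\psi_d$ decreases in its second argument and $L^- \geq 0$. For the CT case I would only claim the off-diagonal version of property (ii), exactly as in Definition \ref{defn:dec_func}, which follows from the corresponding off-diagonal monotonicity of $\phi_d$ and $\psi_d$ (the JSS tight decompositions of Proposition \ref{prop:tight_decomp} inherit whichever version is appropriate).

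The only mild subtlety — and the step I expect to require the most care in writing — is the bookkeeping of which decomposition argument each physical variable ($x$, $w$, $v$) sits in, since the $\psi$ terms have their $x$- and $v$-slots swapped relative to the $\phi$ term (because $-L^+$ is the ``negative'' coefficient and $L^-$ the ``positive'' one). Once one tracks this consistently, every monotonicity claim reduces to one application of property (ii) or (iii) of $\phi_d$/$\psi_d$ composed with multiplication by a nonnegative matrix ($L^+$ or $L^-$), and the result follows. I would close by remarking that $f_{\nu d}$ need not be \emph{tight} even though $\phi_d,\psi_d$ are, because the split $L = L^+ - L^-$ and the independent bounding of the two $\psi$ contributions introduces conservatism — which is why the lemma statement only asserts that $f_{\nu d}$ is \emph{a} decomposition function, not the tightest one.
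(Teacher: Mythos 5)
Your proof is correct and takes essentially the same route as the paper's: a direct term-by-term verification that $f_{\nu d}$ satisfies the three defining properties of a decomposition function, using the monotonicity of $\phi_d$ and $\psi_d$ in their first and second argument bundles together with the nonnegativity of $L^+$ and $L^-$, and checking the diagonal identity $f_{\nu d}(\xi,\xi)=\phi(x,w)-L\psi(x,v)=f_\nu(\xi)$. Your extra remarks on the CT off-diagonal variant and on the (non-)tightness of $f_{\nu d}$ are accurate but not required for the claim.
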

\begin{proof}
${f}_{\nu d}$ is increasing in \md{$\xi_1$} since it is a summation of three increasing mappings in \md{$\xi_1$}, including $\phi_d(x_1,\md{w_1},x_2,\md{w_2})$ (a decomposition function that by construction is increasing in \md{$(x_1,w_1)$} \md{and hence in $\xi_1=(\md{x_1,w_1,v_1})$}), $-L^+\psi_d(\md{{x}_2,v_2,{x}_1,v_1})$ (a multiplication of the non-positive matrix $-L^+$ and the decomposition function $\psi_d(\md{{x}_2,v_2,{x}_1,v_1})$ \yo{that} is decreasing \yo{in} \md{$(x_1,v_1)$} \md{and hence, \yo{in} $\xi_1$} by construction) and $L^-\psi_d(\md{{x}_1,v_1,{x}_2,v_2})$ (a multiplication of the non-negative matrix $L^-$ and the decomposition function $\psi_d(\md{{x}_1,v_1,{x}_2,v_2})$ \yo{that} is itself increasing \yo{in} \md{$(x_1,v_1)$ and hence, \yo{in} $\xi_1$} by construction). \yo{A s}imilar reasoning shows that \md{${f}_{\nu d}$} is decreasing in \md{$\xi_2$}. Finally, \md{${f}_{\nu d}(\xi,\xi)=\phi_d({x},w,{x},w)-L^+\psi_d({x},v,{x},x)+L^-\psi_d({x},v,{x},v)=\phi(x,w)-L\psi(x,v)=f_{\nu}(x,w,v)=f_{\nu}(\xi)$}. 
\end{proof}
We conclude this subsection by combining the results in Lemmas \ref{lem:linear_bounding} and \ref{lem:nonlinear_bounding}, as well as Proposition \ref{cor:embedding} that %results in 
\yo{yield} the following theorem on correctness of the proposed observer.
\begin{thm}[\mk{Correct Interval Framer}]\label{lem:correctness}
Consider the nonlinear plant $\mathcal{G}$ in \eqref{eq:system} and suppose Assumptions \md{\ref{ass:initial_interval}--\ref{ass:known_input_output}} hold. Then, the dynamical system $\hat{\mathcal{G}}$ in \eqref{eq:observer} constructs a correct interval \mk{framer} for the nonlinear plant $\mathcal{G}$, \md{i.e.,} $\forall t \in \mathbb{T}, \md{\forall w_t \in \mathcal{W}, \forall v_t \in \mathcal{V}}, \underline{x}_t \leq x_t \leq \overline{x}_t$, where $x_t$ and \md{$[\underline{x}_t^\top \overline{x}_t^\top]^\top$} are the state vectors in $\mathcal{G}$ and $\hat{\mathcal{G}}$ at time $t \in \mo{\mathbb{T}}$, respectively. 
\end{thm}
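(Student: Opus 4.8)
The plan is to exhibit a mixed‑monotone embedding system (in the sense of Definition \ref{def:embedding}) for the plant, show that it coincides term‑by‑term with $\hat{\mathcal{G}}$ in \eqref{eq:observer}, and then conclude correctness from the State Framer Property (Proposition \ref{cor:embedding}). First I would recall that, as established in the paragraph preceding Lemma \ref{lem:linear_bounding}, the trajectories of $\mathcal{G}$ coincide with those of the equivalent system \eqref{eq:eqiv_sys} for \emph{any} gain $L$ --- this is obtained by adding the identically‑zero term $L(y_t-Cx_t-Dv_t-\psi(x_t,v_t))$ to the state equation and using the JSS split \eqref{eq:JSS_decom}, which is available because $A,B,C,D$ are chosen via Proposition \ref{prop:JSS_decomp} so that $\phi,\psi$ are JSS. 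Writing $\xi_t\triangleq[x_t^\top\ w_t^\top\ v_t^\top]^\top$, the right‑hand side of \eqref{eq:eqiv_sys} decomposes as $F(\xi_t)=f_\ell(\xi_t)+f_\nu(\xi_t)+Ly_t$: the affine constituent of Lemma \ref{lem:linear_bounding}, the nonlinear constituent of Lemma \ref{lem:nonlinear_bounding}, and the known exogenous signal $Ly_t$.

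Next I would set $F_d(\xi_1,\xi_2)\triangleq f_{\ell d}(\xi_1,\xi_2)+f_{\nu d}(\xi_1,\xi_2)+Ly_t$ and argue that $F_d$ is a mixed‑monotone decomposition function for $F$ on $(\mathcal{X}\times\mathcal{W}\times\mathcal{V})^2$, in the CT or DT sense as appropriate. The diagonal identity $F_d(\xi,\xi)=F(\xi)$ is inherited from the corresponding identities in Lemmas \ref{lem:linear_bounding}--\ref{lem:nonlinear_bounding} and is trivial for the constant $Ly_t$; the monotonicity clauses of Definition \ref{defn:dec_func} --- non‑decreasing in the first argument (in the off‑diagonal sense for CT, the full sense for DT) and non‑increasing in the second --- are imposed componentwise and are additive, hence preserved under the sum, while $Ly_t$ is simultaneously non‑decreasing and non‑increasing. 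This in particular shows \eqref{eq:eqiv_sys} is mixed‑monotone; note that only \emph{a} decomposition function is required here, so tightness of $f_{\nu d}$ is irrelevant for correctness.

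Then I would instantiate the $2n$‑dimensional embedding system of Definition \ref{def:embedding} with $F_d$, treating the stacked disturbance $[w^\top\ v^\top]^\top$ as the ``noise'': in the $\underline{x}$‑equation the first argument of $F_d$ receives $[\underline{x}_t^\top\ \underline{w}^\top\ \underline{v}^\top]^\top$ and the second $[\overline{x}_t^\top\ \overline{w}^\top\ \overline{v}^\top]^\top$, and vice versa in the $\overline{x}$‑equation. Substituting the explicit formulas from Lemmas \ref{lem:linear_bounding}--\ref{lem:nonlinear_bounding} together with the definitions \eqref{eq:T_CT}--\eqref{eq:T_DT} of $(A-LC)^\uparrow$ and $(A-LC)^\downarrow$, one checks that these $2n$ equations reproduce \eqref{eq:observer} term by term, with initial condition $[\underline{x}_0^\top\ \overline{x}_0^\top]^\top$ matching Assumption \ref{ass:initial_interval}. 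Finally, since \eqref{eq:eqiv_sys} is equivalent to $\mathcal{G}$, is mixed‑monotone, and admits \eqref{eq:observer} as an embedding system, Proposition \ref{cor:embedding} applied with disturbance set $\mathcal{W}\times\mathcal{V}$ yields $\underline{x}_t\le x_t\le\overline{x}_t$ for all $t\in\mathbb{T}$, $w_t\in\mathcal{W}$, $v_t\in\mathcal{V}$ whenever $x_0\in[\underline{x}_0,\overline{x}_0]$ --- which is precisely the correctness property of Definition \ref{defn:framers}, holding for every $L\in\mathbb{R}^{n\times l}$, i.e., correctness by construction.

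The step I expect to be the main obstacle is the bookkeeping in matching the embedding system to \eqref{eq:observer}: one must be careful to (i) keep the CT diagonal block inside $(A-LC)^\uparrow$, so that the CT off‑diagonal monotonicity clause of Definition \ref{defn:dec_func} --- rather than full monotonicity --- is what is actually invoked; (ii) thread the disturbance endpoints $\underline{w},\overline{w},\underline{v},\overline{v}$ into the right slots of both $f_{\ell d}$ and $f_{\nu d}$; and (iii) correctly regroup the nonnegative/nonpositive parts $B^\pm$, $(LD)^\pm$, $L^\pm$ when comparing with \eqref{eq:observer}. Once the additivity of decomposition functions in Step 2 is in hand, the remaining manipulations are routine.
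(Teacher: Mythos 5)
Your proposal is correct and follows essentially the same route as the paper: both form $f_d = f_{\ell d} + f_{\nu d} + Ly_t$ from Lemmas \ref{lem:linear_bounding} and \ref{lem:nonlinear_bounding}, invoke additivity of decomposition functions, identify the resulting $2n$-dimensional embedding system with \eqref{eq:observer}, and conclude via Proposition \ref{cor:embedding}. Your extra care about the CT off-diagonal monotonicity clause and the placement of the disturbance endpoints is sound detail that the paper leaves implicit.
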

\begin{proof}
It is straightforward to show that the summation of decomposition functions of constituent systems is \yo{also} a decomposition function of the summation of the constituent systems. %This, in addition to 
\yong{Combining this with} Lemmas \ref{lem:linear_bounding} and \ref{lem:nonlinear_bounding} implies that \md{$f_d(\xi_1,\xi_2) \triangleq f_{\ell d}(\xi_1,\xi_2)+f_{\nu d}(\xi_1,\xi_2)+Ly$, with $\xi \triangleq [x^\top \ w^\top \ v^\top]^\top$} is a decomposition function for the system in \eqref{eq:eqiv_sys}, and equivalently, for System \eqref{eq:system}. Consequently, the $2n$-dimensional system \md{$\begin{bmatrix} ({\underline{x}}_t^+)^\top & (\overline{x}_t^+)^\top \end{bmatrix}^\top=\begin{bmatrix}  {f}^\top_d(\underline{\xi}_t,\overline{\xi}_t) & {f}^\top_d(\overline{\xi}_t,\underline{\xi}_t)\end{bmatrix}^\top$} {with initial condition $\begin{bmatrix} \underline{x}_0^\top & \overline{x}_0^\top\end{bmatrix}^\top$}, is an embedding system for \eqref{eq:system} (cf. Definition \ref{def:embedding}). So, $\underline{x}_t \leq x_t \leq \overline{x}_t$, by Proposition \ref{cor:embedding}.%return the correctness of the observer, by replacing the     
\end{proof}
\subsection{\md{ISS and $\mathcal{H}_{\infty}$-Optimal Observer Design}}
\md{In addition to the correctness property, it is important to guarantee the} 
%After obtaining the framer property by construction, 
%we are interested in studying 
stability of the proposed %correct 
\mk{framer}, %In other words, we \yong{wish to} 
\md{i.e, we aim to design the observer gain $L$ to ensure \yo{input-to-state stability (ISS) of}} % asymptotic convergence of} %such that 
the observer error, $\varepsilon_t \triangleq \overline{x}_t-\underline{x}_t$ %, %converges 
%to \md{finite steady state values} 
(cf. Definitions \ref{defn:error} and \ref{defn:stability}). \yo{Before introducing our observer design, we first find} % \md{Prior to that, we need to compute} 
%Before stating our main results on observer stability, we derive 
some upper bounds for \yong{the interval widths of the JSS functions in terms of the interval widths of their domains \md{through the following lemma}}. %increments, in terms of their variable increments, 
%which \md{is required} in deriving the %observer 
%stability conditions.  
\begin{lem}[JSS Function \yong{Interval Width} Bounding] \label{lem:func_increment}
Let $f:\mathcal{\md{Z\triangleq \mathcal{X} \times \mathcal{W}}} \subset \mathbb{R}^\md{n_z} \to \mathbb{R}^p$ \mk{be a mapping that satisfies the assumptions in Proposition \ref{prop:JSS_decomp} and hence, can be decomposed in the form of \eqref{eq:JSS_decomp}}.
 %with known/available upper and lower bounds for its (generalized) Jacobian over $\mathcal{X}$, i.e., $\forall x \in \mathcal{X},J_q(x) \in [\underline{J}_q,\overline{J}_q]$. Suppose $q$ can be decomposed as $\forall x \in \mathcal{X},q(x)=\lambda(x)+Hx$, such that $H \in \mathbb{R}^{p \times n}$ and $\lambda \triangleq [\lambda_1 \dots \lambda_p]^\top : \mathcal{X} \subset \mathbb{R}^n \to \mathbb{R}^m$ is a JSS mapping (cf. Definition \ref{defn:JSS} and Proposition \ref{prop:JSS_decomp} on the existence of such \yong{a} decomposition). 
 Let $\mu_d \triangleq [\mu_{d,1}\dots \mu_{d,p}]^\top : \md{\mathcal{Z} \times \mathcal{Z}} \to \mathbb{R}^p$ be the tight decomposition function for the JSS mapping $\mu(\cdot)$, given in Proposition \ref{prop:tight_decomp}. \mk{Then,} for any \md{interval domain $\underline{z}\le z\triangleq [x^\top \ w^\top]^\top\le\overline{z}$ in %ordered pair of 
$ \mathcal{Z}$}, %us define the increment of $x$ as 
%$\yong{\varepsilon} \triangleq \overline{x}-\underline{x}$ \yong{be the domain interval width} and %the increment of $\lambda_d$ as 
%$\Delta \lambda_d \triangleq [\Delta \lambda_{d,1} \dots \Delta \lambda_{d,p}]^\top \triangleq \lambda_d(\overline{x},\underline{x})-\lambda_d(\underline{x},\overline{x})$ \yong{the interval width of $\lambda_d$}.  
the following inequality %that relates the JSS function increment to its variable increment, 
holds:
\md{
\begin{align}\label{eq:increment_bounding}
\Delta^ \mu_{d} \leq \overline{F}^{\mu} \varepsilon,% \ \hspace{-.1cm}\text{where}\hspace{-.1cm} \ \overline{F}^{\mu} \hspace{-.1cm} \triangleq [\overline{F}^{\mu}_x \ \overline{F}^{\mu}_w] \triangleq \hspace{-.1cm} (\overline{J}^\mu)^++\hspace{-.1cm} (\underline{J}^\mu)^-,
\end{align} 
\yo{where $\overline{F}^{\mu} \hspace{-.1cm} \triangleq [\overline{F}^{\mu}_x \ \overline{F}^{\mu}_w] \triangleq \hspace{-.1cm} (\overline{J}^\mu)^++\hspace{-.1cm} (\underline{J}^\mu)^-$,} 
%$\forall s \in \{x,w\}$, 
with %$\Delta^\mu_d \triangleq [\Delta^\mu_{d,1} \dots \Delta^\mu_{d,p}]^\top \triangleq \mu_d(\overline{z},\underline{z})-\mu_d(\underline{z},\overline{z})$, 
$\varepsilon \triangleq \overline{z}-\underline{z}$, $\mathcal{IZ} \triangleq [\underline{z}, \overline{z}]$, $\overline{J}^\mu=\overline{J}^f-H$, $\underline{J}^\mu=\underline{J}^f-H$, $\overline{F}^\mu_x \in \mathbb{R}^{n \times n}$ and $\overline{F}^\mu_w \in \mathbb{R}^{n \times n_w}$.} 
%and $\overline{J}^f,\underline{J}^f$ being the upper and lower bounds for the Jacobian matrix of $f$ with respect to $s$ over the interval $[\underline{z},\overline{z}]$, respectively}.
%$\varepsilon$ is the interval width of $[\underline{x},\overline{x}]$ (cf. Definition \ref{defn:interval})}.
%\begin{enumerate}[(i)]
%\item The mapping $\lambda(x)=f(x)-Ax$
%\end{enumerate} 
% (cf. Definition \ref{defn:JSS}).
\end{lem}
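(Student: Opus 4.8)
The plan is to bound the interval width of the tight decomposition function $\mu_d$ entrywise and then assemble the bounds into the matrix inequality \eqref{eq:increment_bounding}. First I would recall from Proposition~\ref{prop:tight_decomp} that for each $i \in \mathbb{N}_p$ the tight decomposition has the closed form $\mu_{d,i}(z_1,z_2) = \mu_i(D^i z_1 + (I_{n_z} - D^i) z_2)$, where $D^i \in \mathbb{D}_{n_z}$ is the binary diagonal selector matrix from \eqref{eq:Dj}. Hence the $i$-th component of the width is
\begin{align*}
\Delta^\mu_{d,i} = \mu_{d,i}(\overline{z},\underline{z}) - \mu_{d,i}(\underline{z},\overline{z}) = \mu_i(D^i\overline{z} + (I - D^i)\underline{z}) - \mu_i(D^i\underline{z} + (I - D^i)\overline{z}),
\end{align*}
i.e. the difference of $\mu_i$ evaluated at the two ``extreme'' vertices of the box $[\underline{z},\overline{z}]$ that respectively maximize and minimize the JSS function $\mu_i$.

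Next I would apply the mean value theorem (using differentiability of $\mu$, inherited from $f$ via Assumption~\ref{ass:mixed_monotonicity} and the affine shift by $H$) to write this difference as $J^\mu_i(\zeta)\,(a - b)$ for some point $\zeta$ on the segment between the two vertices, where $a \triangleq D^i\overline{z} + (I-D^i)\underline{z}$ and $b \triangleq D^i\underline{z} + (I-D^i)\overline{z}$, so that $a - b = (2D^i - I_{n_z})\varepsilon$. Componentwise, $(a-b)_j = \varepsilon_j$ when $D^i_{jj}=1$ and $(a-b)_j = -\varepsilon_j$ when $D^i_{jj}=0$, and by \eqref{eq:Dj} $D^i_{jj}=1$ exactly when $\overline{J}^\mu_{ij} \ge 0$. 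Combining with the JSS sign-invariance (so $J^\mu_{ij}(\zeta) \in [\underline{J}^\mu_{ij}, \overline{J}^\mu_{ij}]$ with a fixed sign for all $\zeta$), each term $J^\mu_{ij}(\zeta)(a-b)_j$ is nonnegative and bounded above by: $\overline{J}^\mu_{ij}\varepsilon_j$ when $\overline{J}^\mu_{ij}\ge 0$ (here $(a-b)_j = \varepsilon_j \ge 0$ and $0 \le J^\mu_{ij}(\zeta) \le \overline{J}^\mu_{ij}$), and by $-\underline{J}^\mu_{ij}\varepsilon_j$ when $\overline{J}^\mu_{ij} < 0$ (here $D^i_{jj}=0$, $(a-b)_j = -\varepsilon_j$, and $\underline{J}^\mu_{ij} \le J^\mu_{ij}(\zeta) \le \overline{J}^\mu_{ij} < 0$, so $-J^\mu_{ij}(\zeta) \le -\underline{J}^\mu_{ij}$). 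In both cases the bound is $((\overline{J}^\mu)^+ + (\underline{J}^\mu)^-)_{ij}\,\varepsilon_j = \overline{F}^\mu_{ij}\varepsilon_j$, using that $(\overline{J}^\mu)^+$ picks up the nonnegative entries of $\overline{J}^\mu$ and $(\underline{J}^\mu)^-$ picks up the magnitudes of the negative entries of $\underline{J}^\mu$ (which are the entries $j$ with $\overline{J}^\mu_{ij}<0$, since $\underline{J}^\mu \le \overline{J}^\mu$).

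Summing over $j \in \mathbb{N}_{n_z}$ gives $\Delta^\mu_{d,i} \le \sum_j \overline{F}^\mu_{ij}\varepsilon_j = (\overline{F}^\mu \varepsilon)_i$, and stacking over $i$ yields \eqref{eq:increment_bounding}; the block partition $\overline{F}^\mu = [\overline{F}^\mu_x \ \overline{F}^\mu_w]$ with the stated dimensions follows directly from the partition $z = [x^\top \ w^\top]^\top$. The main obstacle I anticipate is handling the sign bookkeeping cleanly — in particular verifying that the selector $D^i$ from \eqref{eq:Dj} aligns exactly with the sign pattern of $\overline{J}^\mu$ so that the ``$\max$'' vertex and ``$\min$'' vertex produce a nonnegative increment, and then confirming that the crude entrywise upper bound collapses precisely to $(\overline{J}^\mu)^+ + (\underline{J}^\mu)^-$ rather than something looser like $|\overline{J}^\mu| + |\underline{J}^\mu|$. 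Care is also needed if $\mu$ is only locally Lipschitz rather than differentiable, in which case the mean value argument is replaced by integrating a generalized Jacobian selection along the segment, but the sign reasoning is unchanged.
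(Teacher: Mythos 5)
Your argument is correct and is essentially the proof the paper relies on: the paper's own ``proof'' merely defers to Lemma~3 of \cite{moh2022intervalACC} (noting only that the domain is now the augmented vector $z=[x^\top\ w^\top]^\top$), and that argument is precisely your mean-value-theorem-along-the-segment between the two selected vertices combined with the JSS sign bookkeeping. One small slip: by \eqref{eq:H_decomp} every entry of $J^\mu=J^f-H$ is either nonnegative with $\underline{J}^\mu_{ij}=0$ or nonpositive with $\overline{J}^\mu_{ij}=0$, so the case $\overline{J}^\mu_{ij}<0$ in your split never actually occurs and the sub-claim $0\le J^\mu_{ij}(\zeta)$ fails on the negative-JSS entries; the intended split is on the sign of the entry $J^\mu_{ij}$ itself (equivalently on $D^i_{jj}$), after which your two bounds $(\overline{J}^\mu)^+_{ij}\varepsilon_j$ and $(\underline{J}^\mu)^-_{ij}\varepsilon_j$ go through verbatim and the conclusion \eqref{eq:increment_bounding} is unaffected.
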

\vspace{-.3cm}
\md{\begin{proof}
The proof follows the lines of the proof of \cite[Lemma 3]{moh2022intervalACC}, with the slight difference that here, \yo{the domain is} %due to 
the augmentation of the state $x$ and the noise $w$. %, the indicator $j$ belongs to $\mathbb{N}_{n_z} \triangleq \{1,\cdots,n_z\}$, where $n_z=n+n_w$, while $i$ belongs to $\mathbb{N}_{n} \triangleq \{1,\cdots,n\}$.
\end{proof}}
%\vspace{-.4cm} 
Now, \mk{equipped} with the tools in Lemma \ref{lem:func_increment}, we derive sufficient LMIs to synthesize the \yong{stabilizing} observer gain $L$ for both DT and CT systems through the following theorem.
\begin{thm}[\md{ISS and $\mathcal{H}_{\infty}$-Optimal Observer Design}]\label{thm:stability}
Consider the nonlinear plant $\mathcal{G}$ in \eqref{eq:system} and suppose Assumptions \md{\ref{ass:mixed_monotonicity} and \ref{ass:known_input_output}} %in Lemma \ref{lem:correctness} 
hold. Then, the proposed correct interval \mk{framer} $\hat{\mathcal{G}}$ \md{in \eqref{eq:observer}} is \md{ISS}, \mk{and hence, is an interval observer} in the sense of Definition \ref{defn:stability}, \md{and also is $\mathcal{H}_{\infty}$-optimal (cf. Definition \ref{defn:H_inf})}, if there exist matrices $\mathbb{R}^{n \times n} \ni P \succ \mathbf{0}_{n,n}, \md{G \in \mathbb{R}^{n \times l}_+}$ and \md{$\gamma \in \mathbb{R}_{++}$} \md{that solve the following problem:}
%$J \in \mathbb{R}^{l \times n}$, \yong{$J \le 0$}, 
\md{
\begin{align} \label{eq:H_inf_SDP}
(\gamma^*,P^*,G^*) \in \argmin\limits_{\{\gamma,P,G\}} \gamma \ \text{s.t.} \ \Gamma \prec   0, \ \text{where} %\ \text{where}
\end{align}
}
%\yo{where} 
%such that \md{$G \geq 0$ and $\Gamma \prec   0 $, where}
\begin{enumerate}[(i)]
\item %in case that 
if $\mathcal{G}$ is a CT system, \md{then} %\md{${-G C} \in  \mathbb{M}_n, P \in \mathbb{D}_n$,} %then there exists $\alpha >0$, such that
\md{
\begin{align}\label{eq:CT_stability}
\hspace{-0.3cm}\Gamma \hspace{-.1cm} \triangleq \hspace{-.1cm} {\small\begin{bmatrix} \Omega & \Lambda & I \\ \Lambda^\top & -\gamma I & 0 \\ I & 0 & -\gamma I \end{bmatrix}}\hspace{-.05cm},{-G C} \hspace{-.075cm} \in \hspace{-.075cm}  \mathbb{M}_n, P \hspace{-.075cm}\in \hspace{-.075cm} \mathbb{D}_n, GD \hspace{-.075cm} \geq \hspace{-.075cm} 0, %\Xi  
%{-G C} \in  \mathbb{M}_n, P \in \mathbb{D}_n,
%&X^\top A+J^\top C \in \mathbb{M}_n, X \in \mathbb{D}_n, J \leq \mathbf{0}_{l,n},
\end{align}
}\vspace{-0.2cm}
%\yong{for all $\alpha >0$,} 

\noindent\md{with %$\Omega \triangleq A^\top X+X^\top A+C^\top J +J^\top C+\overline{F}_{\phi}^\top X+X^\top \overline{F}_{\phi}-\overline{F}_{\psi}^\top J -J^\top \overline{F}_{\psi} $ 
{$\Omega \triangleq ((A^m)+\overline{F}^{\phi}_x)^\top P+P (A^m+\overline{F}^{\phi}_x)+(-C+\overline{F}^{\psi}_x)^\top G^\top +G (-C  + \overline{F}^{\psi}_x) $}, and} %and $\Lambda \triangleq P[\overline{F}^\phi_w+|B| \ 0]+G[0 \ \overline{F}^\psi_v+D]$;}} %$\Lambda \triangleq P+\alpha A^\top X+\alpha C^\top J -X^\top+\alpha \overline{F}_{\phi}^\top X-\alpha \overline{F}_{\psi}^\top J$. %and $\Xi \triangleq X^\top A+J^\top C$. %Moreover, 
\item if $\mathcal{G}$ is a DT system, \md{then} %then %$\exists P \succ 0, X, J$ with appropriate dimensions and $\alpha >0$, such that
\md{
\begin{align}\label{eq:DT_stability}
\hspace{-.3cm} \Gamma \hspace{-.1cm} \triangleq \hspace{-.1cm} -\hspace{-.1cm} {\small\begin{bmatrix} P & \Omega & \Lambda & 0 \\ \Omega^\top & P & 0 & I \\ \Lambda^\top & 0 & \gamma I & 0 \\ 0 & I & 0 & \gamma I \end{bmatrix}}\hspace{-.05cm}, %J \leq \hspace{-.1cm}0, 
G C\hspace{-.075cm} \geq\hspace{-.075cm} 0, -P\hspace{-.075cm}  \in \hspace{-.075cm} {\mathbb{M}_n}, GD \hspace{-.075cm} \geq \hspace{-.075cm} 0,
%&X \in \tilde{\mathbb{M}}_n, J \geq \mathbf{0}_{l,n},
\end{align} 
with %$\Omega \triangleq A^\top X+X^\top A+C^\top J +J^\top C+\overline{F}_{\phi}^\top X+X^\top \overline{F}_{\phi}-\overline{F}_{\psi}^\top J -J^\top \overline{F}_{\psi} $ 
{$\Omega \triangleq P (|A|+\overline{F}^{\phi}_x)+G (C  + \overline{F}^{\psi}_x)$}. %and $\Lambda \triangleq P[\overline{F}^\phi_w+|B| \ 0]+G[0 \ \overline{F}^\psi_v+D]$;}
}
%where $\Gamma \triangleq (|A|^\top+\overline{F}_{\phi}^\top)X-(|C|^\top+\overline{F}_{\psi}^\top)J$. %Moreover, the corresponding stabilizing observer gain $L$ can be obtained as $$L=-(X^\top)^{-1}J^\top .$$
\end{enumerate}
Furthermore, in both cases, \md{$\Lambda \triangleq P[\overline{F}^\phi_w+|B| \ 0]+G[0 \ \overline{F}^\psi_v+D]$ and $\overline{F}^{\phi}_x,\overline{F}^{\psi}_x,\overline{F}^{\phi}_w,\overline{F}^{\psi}_v$} %and $\overline{F}_{\psi}$ 
are computed by applying Lemma \ref{lem:func_increment} on the JSS functions $\phi$ and $\psi$, respectively. Finally, the corresponding \md{optimal} stabilizing observer gain \md{$L^*$} can be obtained as \md{$L^*=(P^*)^{-1}G^*.$}
\end{thm}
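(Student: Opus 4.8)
The plan is to derive the dynamics of the framer error $\varepsilon_t \triangleq \overline{x}_t - \underline{x}_t$ by subtracting the two equations in \eqref{eq:observer}, bound the increments of the nonlinear (JSS) decomposition terms by Lemma \ref{lem:func_increment}, and then apply a Lyapunov/LMI argument (continuous-time dissipativity for the CT case, a discrete-time bounded-real-lemma-type argument for the DT case) to simultaneously certify ISS and the $\mathcal{H}_\infty$ bound $\gamma$. First I would compute $\varepsilon_t^+ = \overline{x}_t^+ - \underline{x}_t^+$ from \eqref{eq:observer}. The affine part contributes $\big((A-LC)^\uparrow + (A-LC)^\downarrow\big)\varepsilon_t$; by \eqref{eq:T_CT}–\eqref{eq:T_DT} this equals $(A-LC)^{\mathrm{m}}\varepsilon_t = (A^{\mathrm{m}} - LC + \ldots)\varepsilon_t$ in the CT case (Metzler, hence the requirement $-GC \in \mathbb{M}_n$) and $|A-LC|\,\varepsilon_t$ in the DT case (nonnegative, hence $GC \ge 0$, $-P \in \mathbb{M}_n$). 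The disturbance terms $B^+\overline{w} - B^-\underline{w} - (B^+\underline{w} - B^-\overline{w}) = |B|\,\Delta w$ and similarly $(LD)^{\pm}$ contribute $|LD|\,\Delta v$; crucially the sign constraint $GD \ge 0$ (equivalently $LD \ge 0$ up to the $P^{-1}$ factor) lets one write these without absolute values in $\Lambda$. The nonlinear terms $\phi_d(\overline{x}_t,\overline{w},\underline{x}_t,\underline{w}) - \phi_d(\underline{x}_t,\underline{w},\overline{x}_t,\overline{w})$ and the two $\psi_d$ differences are each bounded above, entrywise and (after the framer-property-guaranteed nonnegativity of increments) in absolute value, by $\overline{F}^{\phi}_x \varepsilon_t + \overline{F}^{\phi}_w \Delta w$ and $\overline{F}^{\psi}_x \varepsilon_t + \overline{F}^{\psi}_v \Delta v$ via Lemma \ref{lem:func_increment}; combining with $L^+, L^- \ge 0$ gives $\varepsilon_t^+ \le \mathcal{A}\,\varepsilon_t + \mathcal{B}\,\Delta$ componentwise, where $\mathcal{A}$ is Metzler (CT) or nonnegative (DT) and $\mathcal{B} \ge 0$.

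Next I would invoke the comparison principle: since $\varepsilon_0 \ge 0$, $\Delta \ge 0$, and $\mathcal{A}$ is Metzler/nonnegative with $\mathcal{B} \ge 0$, the solution of the comparison system $\dot e = \mathcal{A} e + \mathcal{B}\Delta$ (resp. $e^+ = \mathcal{A} e + \mathcal{B}\Delta$) with $e_0 = \varepsilon_0$ satisfies $0 \le \varepsilon_t \le e_t$ for all $t$, so it suffices to establish ISS and the $\mathcal{H}_\infty$ gain for the \emph{linear} comparison system. For that I use the quadratic Lyapunov function $V(e) = e^\top P e$ with $P \succ 0$ diagonal (CT) or $P \succ 0$ Metzler-complement (DT) — the diagonal/structure restrictions on $P$ are exactly what make $P\mathcal{A}$ expressible through $P A^{\mathrm{m}}$, $G = PL$, etc., keeping the inequality linear in the decision variables. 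In the CT case, $\dot V + \tfrac1\gamma \varepsilon^\top\varepsilon - \gamma \Delta^\top \Delta \le 0$ for all $e,\Delta$ is, by the Schur complement on the $\Delta$-block and the $\varepsilon$-block, equivalent to $\Gamma \prec 0$ with $\Gamma$ as in \eqref{eq:CT_stability} (the off-diagonal $I$ blocks encoding the $\gamma^{-1}\|\varepsilon\|^2$ output-penalty after a Schur step); integrating from $0$ to $\infty$ gives $\|\varepsilon\|_{\ell_2} \le \gamma \|\Delta\|_{\ell_2}$ and, with $\Delta \equiv 0$, exponential decay of $V$, hence the $\mathcal{KL}+\mathcal{K}_\infty$ ISS bound of Definition \ref{defn:stability}. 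The DT case is the analogous bounded-real lemma: $V(e^+) - V(e) + \tfrac1\gamma\|\varepsilon\|^2 - \gamma\|\Delta\|^2 \le 0$ with $e^+ = \mathcal{A}e + \mathcal{B}\Delta$ becomes, after Schur complements on the quadratic-in-$(\mathcal{A},\mathcal{B})$ terms, the $4\times 4$ block inequality $\Gamma \prec 0$ of \eqref{eq:DT_stability}; summing over $t$ yields the $\ell_2$ gain and, for $\Delta \equiv 0$, $V$ is a strict Lyapunov function. In both cases minimizing $\gamma$ in \eqref{eq:H_inf_SDP} minimizes the certified $L_\infty$ gain $\|\tilde{\mathcal{G}}\|_{L_\infty}$ of Definition \ref{defn:H_inf}, and $L^* = (P^*)^{-1}G^*$ recovers the gain since $P$ is invertible.

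The main obstacle — and the step I would spend the most care on — is the algebraic bookkeeping that turns the componentwise error bound into the advertised LMIs while respecting the sign/structure constraints: one must verify that with $P$ diagonal (CT) or $-P$ Metzler (DT) and $G \ge 0$, the products $PA^{\mathrm{m}}$, $P|A|$, $GC$, $G\overline{F}^{\psi}_x$, $GD$ etc. correctly reproduce $P\mathcal{A}$ and $P\mathcal{B}$ \emph{and} that $\mathcal{A}$ is genuinely Metzler/nonnegative (this is where $-GC \in \mathbb{M}_n$, $GD \ge 0$ in CT, and $GC \ge 0$, $-P \in \mathbb{M}_n$, $GD \ge 0$ in DT enter, ensuring both the comparison-principle hypothesis and that $e^\top P e$ with the structured $P$ is a valid Lyapunov function for a positive system). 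A secondary subtlety is justifying that bounding $\varepsilon_t^+$ \emph{from above} by the linear comparison system is legitimate for the $\ell_2$/ISS estimate — this rests on $\varepsilon_t \ge 0$ (Definition \ref{defn:error}, from correctness in Theorem \ref{lem:correctness}) together with monotonicity of $x \mapsto x^\top P x$ on the nonnegative orthant when $P \succeq 0$ has nonnegative entries, which again is guaranteed by the structural constraints on $P$. Once these sign facts are pinned down, the Lyapunov inequalities and Schur-complement manipulations are routine.
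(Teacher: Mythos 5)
Your plan follows the paper's proof essentially step for step: derive the nonnegative framer-error dynamics from \eqref{eq:observer}, bound the JSS increments via Lemma \ref{lem:func_increment}, use the structural constraints ($P$ diagonal or $-P\in\mathbb{M}_n$, $G\ge 0$, the sign conditions on $GC$, and $GD\ge 0$) to strip the absolute values and obtain a positive linear comparison system, and then extract the $\mathcal{H}_\infty$ gain and ISS from the standard CT/DT bounded-real LMIs --- which the paper simply outsources to its cited LMI reference and to the Sontag--Wang AG-plus-0-GAS characterization of ISS, while you re-derive them by explicit Lyapunov/Schur-complement arguments. The one slip is your closing justification for transferring the bound from the comparison state $e_t$ back to $\varepsilon_t$: in the DT case $-P\in\mathbb{M}_n$ forces the off-diagonal entries of $P$ to be nonpositive, so $x\mapsto x^\top P x$ is \emph{not} monotone on the nonnegative orthant; no such monotonicity is needed, since $0\le\varepsilon_t\le e_t$ already gives $\|\varepsilon_t\|_2\le\|e_t\|_2$, which is all that Definitions \ref{defn:stability} and \ref{defn:H_inf} require.
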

\begin{proof}
Starting from \eqref{eq:observer}, we first derive the \mk{framer} error ($\varepsilon_t \triangleq \overline{x}_t- \underline{x}_t$) dynamics. Then, we show that the provided conditions in \eqref{eq:CT_stability} and \eqref{eq:DT_stability} are sufficient for stability of the error system in the CT and DT cases, respectively. To do so, define \md{$\Delta^ \mu_d  \triangleq \mu_d(\overline{x},\overline{s},\underline{x},\underline{s})-\mu_d(\overline{x},\overline{s},\underline{x},\underline{s})$ \yo{and} $\Delta s \triangleq \overline{s}-\underline{s}, \forall \mu \in \{\phi,\psi \}, s \in \{w,v\}$}. %and note that the LMIs in \eqref{eq:CT_stability} and \eqref{eq:DT_stability} and Schur complements 
%imply that $X$ is positive definite and hence invertible (non-singular) in both CT and DT cases. 

Now, considering the CT case, from \eqref{eq:observer} and \eqref{eq:T_CT}, we obtain  \yong{the observer error dynamics:}
\md{
{\small
\begin{align}%\label{eq:error_dynamics_2}
%\begin{array}{rl}
\nonumber  \dot{\varepsilon}_t\hspace{-.1cm}&=\hspace{-.1cm}(\hspace{-.05cm}(A\hspace{-.1cm}-\hspace{-.1cm}LC)^\text{d}\hspace{-.1cm}+\hspace{-.1cm}|(A\hspace{-.1cm}-\hspace{-.1cm}LC)^\text{nd}|)\varepsilon_t\hspace{-.1cm}+\hspace{-.1cm}\Delta^ \phi_d\hspace{-.1cm}+\hspace{-.1cm}|L||\Delta^ \psi_d| \hspace{-.1cm}+\hspace{-.1cm}|B|\Delta w \hspace{-.1cm}+ \hspace{-.1cm} |LD|\Delta v \\
%\nonumber & \hspace{-.1cm}\leq\hspace{-.1cm} (A^\text{d}\hspace{-.1cm}-\hspace{-.1cm}(LC)^\text{d}\hspace{-.1cm}+\hspace{-.1cm}|A^\text{nd}|\hspace{-.1cm}+\hspace{-.1cm}|\hspace{-.1cm}-\hspace{-.1cm}(LC)^\text{nd}|\hspace{-.1cm}+\hspace{-.1cm}\overline{F}^{\phi}_x\hspace{-.1cm}+\hspace{-.1cm}|L|\overline{F}^{\psi}_x)\varepsilon_t+\delta^{w,v}_L \\
&\hspace{-.cm}\leq\hspace{-.1cm}(A^\text{m}\hspace{-.1cm}+\hspace{-.1cm}(-LC)^\text{m}\hspace{-.1cm}+\hspace{-.1cm}\overline{F}^{\phi}_x\hspace{-.1cm}+\hspace{-.1cm}|L|\overline{F}^{\psi}_x)\varepsilon_t\hspace{-.1cm}+\hspace{-.1cm}\delta^{w,v}(L),\label{eq:error_dynamics_2}
%\end{array}
\end{align} 
}
}\vspace{-0.35cm}

\noindent where \md{$\delta^{w,v}(L) \triangleq (\overline{F}^\phi_w+|B|)\Delta w + (|L|\overline{F}^\psi_v+|LD|)\Delta v $ \yo{and} $\yo{\overline{F}^{\mu}_s,} \ \forall \mu \in \{\phi,\psi \},s \in \{w,v\}$} is given in \eqref{eq:increment_bounding}. \yo{The} inequality holds by Lemma \ref{lem:func_increment}, Proposition \ref{prop:bounding}, and the facts that $\forall M,N \in \mathbb{R}^{n \times n}$, $(M+N)^\text{d}=M^\text{d}+N^\text{d},(M+N)^\text{nd}=M^\text{nd}+N^\text{nd}$, $|M+N| \leq |M|+|N|$ by triangle inequality and the fact that $\varepsilon_t \geq 0$ by the correctness \yong{property} (Lemma \ref{lem:correctness}). Now, note that by the \emph{Comparison Lemma} \cite[Lemma 3.4]{khalil2002nonlinear} and positivity of the system in %\eqref{eq:error_dynamics} and
 \eqref{eq:error_dynamics_2}, stability of the system in \eqref{eq:error_dynamics_2} implies %the one 
\yong{stability} for the actual error system. To show the former, %and in order to be able to provide tractable ``linear" matrix inequalities for that, 
we \yong{require the following:} 
%(i) %require 
\md{$G$ and $P$} %to be 
\yong{are} \md{non-negative} and diagonal matrices, respectively. This forces \md{$P$} and its inverse to be diagonal matrices with strictly positive diagonal elements, and since \md{$G$} is forced to be non-\md{negative}, \md{$L=P^{-1}G$} must be non-negative, and hence $|L|=L$. 
\md{Moreover, $-GC$} is Metzler, {\md{which} results in \md{$-LC=-P^{-1}G C$} being Metzler, since} %On the other hand, 
\mk{it} %$-LC=(X^\top)^{-1}J^\top C=(X^\top)^{-1}(J^\top C)$ %, which 
is a product of %the 
a diagonal and positive matrix \md{$P^{-1}$} and %the 
a Metzler matrix \md{$-G C$}. %\yong{and it can be shown that their product is Metzler. 
Thus, $(-LC)^\text{m}=-LC$. \md{Further, since $GD$ is non-negative, then $LD=P^{-1}GD$ is a product of two non-negative matrices $P^{-1}$ and $GD$ and so, $|LD|=LD$.}
%Therefore the equality $(-LC)^\text{m}=-LC$ becomes always true. 
%By i) and ii), 
\md{Hence}, the system in \eqref{eq:error_dynamics_2} \yo{becomes} the linear comparison system

\phantom{a}\vspace{-.3cm}
{\small
\begin{align}\label{eq:comparison}
 \dot{\varepsilon}_t \hspace{-.1cm} \le \hspace{-.1cm}  (A^\text{m}\hspace{-.1cm} -\hspace{-.1cm} LC\hspace{-.1cm} +\hspace{-.1cm} \overline{F}_{\phi}\hspace{-.1cm} +\hspace{-.1cm} L\overline{F}_{\psi})\varepsilon_t\md{+\hspace{-.1cm} L(\overline{F}^\psi_v\hspace{-.1cm} +\hspace{-.1cm} D)\Delta v\hspace{-.1cm} +\hspace{-.1cm} (\overline{F}^\phi_w\hspace{-.1cm} +\hspace{-.1cm} |B|)\Delta w},
 \end{align} 
 }\vspace{-.3cm}
 
 \noindent\md{where by \cite[Sec. 9.2.2]{duan2013lmis}, solving the SDP in \eqref{eq:H_inf_SDP},\eqref{eq:CT_stability} results in the optimal observer gain $L^*=(P^*)^{-1}G^*$, in the $\mathcal{H}_{\infty}$ sense, i.e., \eqref{eq:H_inf_Def} holds with $\gamma^*$. This implies that the above linear comparison system \eqref{eq:comparison} satisfies the following asymptotic gain (AG) property \cite{sontag1996new}:
\begin{align}\label{eq:AG}
\limsup_{t \to \infty} \|\varepsilon_t\|_{\infty} \hspace{-.1cm} \leq \hspace{-.1cm} \rho(\| \yo{\tilde{\Delta}}\|_{\infty}), \ \forall \varepsilon_0,\forall \yo{\tilde{\Delta}\hspace{-.1cm}\in\hspace{-.1cm}[\Delta {w}^\top \ \Delta {v}^\top ]^\top,} 
%\tilde{\Delta}\hspace{-.1cm}=\hspace{-.1cm}[\Delta \tilde{w}^\top \ \Delta \tilde{v}^\top ]^\top, 
\end{align}
where $\tilde{\Delta}$ is any realization of the augmented noise \yo{interval} width and $\rho$ is any class $\mathcal{K}_{\infty}$ function that is lower bounded by $\gamma^*\tilde{\Delta}$. On the other hand, by setting $\Delta=0$, the LMIs in \eqref{eq:CT_stability} %boil down 
\yo{reduce} to their noiseless counterparts in \yo{\cite[Eq. (19)]{moh2022intervalACC}}. Hence, by \yo{\cite[Theorem 2]{moh2022intervalACC}}, the comparison system \eqref{eq:comparison} is 0-stable (0-GAS), which in addition to the AG property \eqref{eq:AG} is equivalent to the ISS property for \eqref{eq:comparison} by \cite[Theorem 1-e]{sontag1996new}. Hence, the designed CT observer is also ISS. 
} 
%whose stability is guaranteed by the LMI in \eqref{eq:CT_stability} by \md{\cite[Sec. 9.2.2]{duan2013lmis}}. 
%\cite[(12)]{pipeleers2009extended}. %Now, to make sure that. Further, to guarantee that ii) holds,  Hence,  hold, which guarantees that the conditions in \eqref{eq:CT_stability} is sufficient for stability of the observer in the CT case. 

For the DT case, from \eqref{eq:observer} and \eqref{eq:T_DT} and by \yo{a} similar reasoning to the CT case, we obtain 
\md{ 
\begin{align}\label{eq:error_dynamics_2_DT}
\begin{array}{rl}
%\label{eq:error_dynamics_DT}
{\varepsilon}_{t+1}&=|A\hspace{-.1cm}-\hspace{-.1cm}LC|\varepsilon_t\hspace{-.1cm}+\hspace{-.1cm}\Delta^ \phi_d\hspace{-.1cm}+\hspace{-.1cm}|L||\Delta^ \psi_d|\hspace{-.1cm}+\hspace{-.1cm}|B|\Delta w \hspace{-.1cm}+ \hspace{-.1cm} |LD|\Delta v \\
%\nonumber & \leq (A^d-(LC)^d+|A^{nd}|\hspace{-.1cm}+\hspace{-.1cm}|-(LC)^{nd}|+\overline{F}_{\phi}+|L|\overline{F}_{\psi})\varepsilon \\
&\leq(|A|+|LC|+\overline{F}^{\phi}_x+|L|\overline{F}^{\psi}_x)\varepsilon_t+\delta^{w,v}(L).
\end{array} 
\end{align} 
}\vspace{-0.2cm}

\noindent  In addition, we enforce \md{$-P$} to be Metzler, %an M-matrix, 
  as well as \md{$G$ and $G C$ to be non-negative}. %matrices. 
  Consequently, {since \md{$P$} is positive definite,} % and $-X$ is Metzler,} 
  \md{$P$} becomes a non-singular M-matrix\footnote{An M-matrix is a square matrix \yong{whose negation is Metzler and whose eigenvalues have nonnegative real parts}.}, and hence is inverse-positive \cite[Theorem 1]{plemmons1977m}, i.e., \md{$P^{-1} \geq 0$}. Therefore, \md{$L=P^{-1}G \geq 0$} and \md{$LC=P^{-1}(G C) \geq 0$}, because they are matrix products of non-negative matrices, \md{$P^{-1},G$ and $P^{-1},G C$}, respectively. \md{Finally, by a similar argument as in the CT case, $LD$ is non-negative}. Hence, $|L|=L,|LC|=LC,\md{|LD|=LD}$, and so, the system in \eqref{eq:error_dynamics_2_DT} \yo{becomes} %turns into 
  \begin{align}\yo{\small\begin{array}{rl}{\varepsilon}_{t+1}\le &(|A|+LC+\overline{F}^{\phi}_x+L\overline{F}^{\psi}_x)\varepsilon_t \\
  &\md{+L(\overline{F}^\psi_v+D)\Delta v+(\overline{F}^\phi_w+|B|)\Delta w},\end{array}}\end{align} 
  \md{for} which \yo{the solution to the} \md{%solving 
  the SDP in \eqref{eq:H_inf_SDP},\eqref{eq:DT_stability} \yo{provides} %results in 
  the $\mathcal{H}_{\infty}$-optimal observer gain $L^*=(P^*)^{-1}G^*$, by \cite[Sec. 9.2.3]{duan2013lmis}. Furthermore, a similar argument as in the CT case implies that the DT observer is also ISS.} 
%  is stable if the LMI in \eqref{eq:DT_stability} holds, by \md{\cite[Sec. 9.2.3]{duan2013lmis}}.
%   \cite[(10)]{pipeleers2009extended}.       
\end{proof}

\vspace{-0.1cm}
\mk{Finally, note that %\md{in \yo{the} case 
\md{\yo{if} the LMIs in \eqref{eq:CT_stability} %and/
or \eqref{eq:DT_stability} are infeasible}, %there are cases that the LMIs in Theorem \ref{thm:stability} are inherently infeasible, where applying 
a coordinate transformation \md{can be applied in a straightforward manner, similar to \cite[Section V]{tahir2021synthesis}} (omitted due to space limitations; \md{also} cf. \cite{mazenc2021when} and references therein), \md{which} may also be helpful for making the LMIs in Theorem \ref{thm:stability} feasible, as observed in Section \ref{sec:CT_exm}.}
%\mk{It is worth mentioning that there are cases that the LMIs in Theorem \ref{thm:stability} are inherently infeasible, where applying a coordinate transformation, similar to the ones proposed in the literature, e.g., in \cite{dinh2014interval}, can help making the LMIs feasible (cf. the CT System Example in section \ref{sec:CT_exm} for an illustration).}
%\begin{figure*}[h!] % for sub figures over two columns in 
%\centering
%\subfloat[State, $x_{1}$, and framers, $\underline{x}_1,\overline{x}_1$ in Example I]{\includegraphics[width=0.65\columnwidth,height = 3.0 cm]{Figures/F1_con.eps}}\label{fig:sub1}
%\hfil
%\subfloat[State, $x_{2}$, and framers, $\underline{x}_2,\overline{x}_2$ in Example I]{\includegraphics[width=0.65\columnwidth,height = 3.0 cm]{Figures/F2_con.eps}} \label{fig:sub2}
%\hfil
%\subfloat[State, $x_{3}$, and framers, $\underline{x}_3,\overline{x}_3$ in Example I]{\includegraphics[width=0.65\columnwidth,height = 3.0 cm]{Figures/F3_con.eps}}\label{fig:sub3}
%\hfil
%\subfloat[State, $x_{1}$, as well as framers, $\underline{x}_1,\overline{x}_1$ in Example II]{\includegraphics[width=0.8\columnwidth,height = 3.0 cm]{Figures/F1_dis.eps}}\label{fig:sub4}
%\hfil
%\subfloat[State, $x_{2}$, as well as framers, $\underline{x}_2,\overline{x}_2$ in Example II]{\includegraphics[width=0.8\columnwidth,height = 3.0 cm]{Figures/F2_dis.eps}}\label{fig:sub5}
%\hfil
%\caption{{\small Simulation of the Interval Observer for the systems in Example I and Example II.}}
%\label{fig:figure1}
%\end{figure*}

\vspace{-0.1cm}
\section{Illustrative Examples}
The effectiveness of our observer design is illustrated for CT and DT systems \mk{(using \yo{SeDuMi \cite{sturm1999using}}  \md{to solve the LMIs}).}
%We consider the following CT and DT examples.

\vspace{-0.1cm}
\subsection{CT System Example}\label{sec:CT_exm}
Consider %the following CT nonlinear system, which is 
%a noiseless modification of 
the CT system
in \mk{\cite[Section IV, Eq. (30)]{dinh2014interval}}:
\md{
{\small
\begin{gather*}
%\label{eq:exampleone}
%\begin{array}{cc}
\dot{x}_{1} = x_{2}+w_1, \quad \dot{x}_{2}=b_1x_3-a_1\sin(x_1)-a_2x_2+w_2, \\ %\ B f(x_t)u_t, \quad y_t = Cx_t,
 \dot{x}_3\hspace{-.1cm}=\hspace{-.1cm}-a_3(a_2x_1\hspace{-.1cm}+\hspace{-.1cm}x_2)\hspace{-.1cm}+\hspace{-.1cm}\frac{a_1}{b_1}(a_4\sin(x_1)\hspace{-.1cm}+\hspace{-.1cm}\cos(x_1)x_2)\hspace{-.1cm}-\hspace{-.1cm}a_4x_3\hspace{-.1cm}+\hspace{-.1cm}w_3,
 %y = Cx
%\end{array}
\end{gather*}}\vspace{-0.35cm}

\noindent with %$a_1=35.63$, $b_1=15$, $a_2=0.25$,$a_3=36$, $a_4=200$ and 
output $y=x_1$, $a_1=35.63, b_1=15, a_2=0.25,a_3=36,a_4=200, \mathcal{X}_0 = [19.5, 9] \times [9, 11] \times [0.5,1.5], \md{\mathcal{W}=[-0.1,0.1]^3}$. 
}
%with $f(x) = x_1x_2$, $u_t = 1 + sin(2t)$ and $\mathcal{X}_0= [-1, 2]^3$, $A = 
%\begin{bmatrix}
%2 & 0 & 0\\
%1 & -4 & \sqrt(3)\\
%-1  &-\sqrt(3) & -4
%\end{bmatrix}$, $B =\begin{bmatrix}-1.59 \\ 0 \\ 0.51\end{bmatrix}$ and $C=[1 \ 0 \ 0]$. 
Without a coordinate transformation, the LMIs in \eqref{eq:CT_stability} \md{as well as the approach in \cite{dinh2014interval}} were infeasible. \yo{However,} with a coordinate transformation $z=Tx$ with $T={\footnotesize{\begin{bmatrix}20 & 0.1 & 0.1\\ 0 & 0.01 & 0.06\\0 &-10 & -0.4\end{bmatrix}}}$ \yo{(similar to \cite[Section V]{tahir2021synthesis})} and adding and subtracting $5y$ to the dynamics of $\dot{x}_1$, %we}  
%When implementing the proposed algorithm using YALMIP \cite{Lofberg2004}, %toolbox for optimization, 
%\mk{the LMIs were inherently infeasible. Then, by applying the coordinate transformation introduced in \cite{dinh2014interval},}
%we 
%obtained the %stabilizing 
%observer gain \mk{$L\hspace{-0.05cm}=\hspace{-0.05cm}10^{-6}\times [3.44 \ \, 0\ \, 0.04]^\top\hspace{-0.05cm}$}. %The continuous-time case LMI (\ref{eq:CT_stability}) stated in theorem \ref{thm:stability} was 
 %the true signal is framed by 
 the state framers \mk{returned by our approach, %proposed algorithm, 
 $\underline{x},\overline{x}$ are tighter than the ones obtained by the interval observer in \cite{dinh2014interval}, $\underline{x}^{DMN},\overline{x}^{DMN}$ (primarily because of outer-approximations of the initial framers $\mathcal{X}_0$ due to different coordinate transformations), \md{\yo{as shown in 
Figure \ref{fig:figure1} ($x_1,x_2$ omitted for brevity)}}. Further, %as can be observed from \mkr{Figure \ref{}},} 
% the interval estimate width %function 
the framer error $\varepsilon_t= \overline{x}_t-\mkr{\underline{x}}_t$ \md{is smaller for our approach \yo{when} compared to to the one in 
\cite{dinh2014interval} and} is observed to %, i.e., the observer error, 
tend to \md{steady state} asymptotically.} 

 \subsection{DT System Example}\label{sec:DT_exm}
 %\label{Ex:discrete}
Consider a \md{noisy} variant of \yo{the} %DT 
H\'enon chaos system \cite{Observer_discrete}:
\vspace{-.3cm}
\md{
\begin{align}
\label{eq:exampletwo}   
%\begin{array}{cc}
x_{t+1} =  Ax_t + r[1 -x_{t,1}^2 ]+Bw_t, \quad y_t =  x_{t,1}+v_t,
%y(t) =  x_1(t)\\
%\end{array}
\end{align}
}\vspace{-0.2cm}

\noindent where %$n=2$, %$ y_t \in \mathbb{R} $, 
$A =
\small{\begin{bmatrix}
0 & 1\\
0.3 & 0 
\end{bmatrix}}$, \md{$B=I$}, $r =\small{\begin{bmatrix}0.05 \\ 0\end{bmatrix}}$, $\mathcal{X}_0 = [-2, 2] \times [-1, 1]$, \md{$\mathcal{W}=[-0.01,0.01]^2$ and $\mathcal{V}=[-0.1,0.1]$}. 
%Employing YALMIP to 
\md{\yo{Using the solutions to} the corresponding LMIs in \eqref{eq:DT_stability}, it can be observed \yo{from} Figure \ref{fig:figure2} that the interval estimates for $x_2$ are tighter than the ones returned by the approach in \cite{tahir2021synthesis} (similarly for $x_1$, omitted for brevity). Moreover, the depicted error plots demonstrate the convergence of \yo{the} error sequence to steady state (i.e., \yo{ISS}) \yo{and show} smaller errors for the proposed approach \yo{when} compared to the one in 
\cite{tahir2021synthesis}.} 
%a stabilizing observer gain $L =[ 0.0393 \ 0.0346]^\top$ was obtained.  
%Figure \ref{fig:figure2} \mk{($x_1$ omitted for brevity)}
% The results of simulating the system are shown in Figures \ref{fig:figure2}, where as can be observed, 
% the true state values are within the \mk{both} framers 
 %\mk{shows that our observer returns comparable estimates to those obtained from the approach in \cite{tahir2021synthesis} and the framer error converges to zero.} % Moreover, \mkr{Figure \ref{}} shows that} the observer error converges to zero. 
%\label{sec:examples}%\vspace{-0.05cm}
\begin{figure}[t!] % for sub figures over two columns in 
\centering
%{\includegraphics[width=0.49\columnwidth,trim=10mm 0mm 8mm 0mm]{Figures/F1_con.eps}}\label{fig:sub1}
%\hfil
%{\includegraphics[width=0.49\columnwidth,trim=8mm 0mm 10mm 0mm]{Figures/F2_con.eps}} \label{fig:sub2}
%\hfil
%{\includegraphics[width=0.49\columnwidth,trim=10mm 0mm 10mm 0mm]{Figures/F3_con.eps}}\label{fig:sub3}
%\hfil
%{\includegraphics[width=0.49\columnwidth,trim=10mm 0mm 8mm 0mm]{Figures/x2_CT.eps}}\label{fig:sub1}
%{\includegraphics[width=\columnwidth,trim=10mm 0mm 10mm 0mm]{Figures/x2_CT.eps}}\label{fig:sub1}
%{\includegraphics[scale=.4,trim=10mm 0mm 10mm 0mm]{Figures/x2_CT.eps}}\label{fig:sub1}
%\hfil
%{\includegraphics[width=0.48\columnwidth,trim=5mm -5mm 10mm 0mm]{Figures/x3_CT_2.eps}} \label{fig:sub1}
{\includegraphics[width=0.48\columnwidth,trim=5mm -5mm 10mm 0mm]{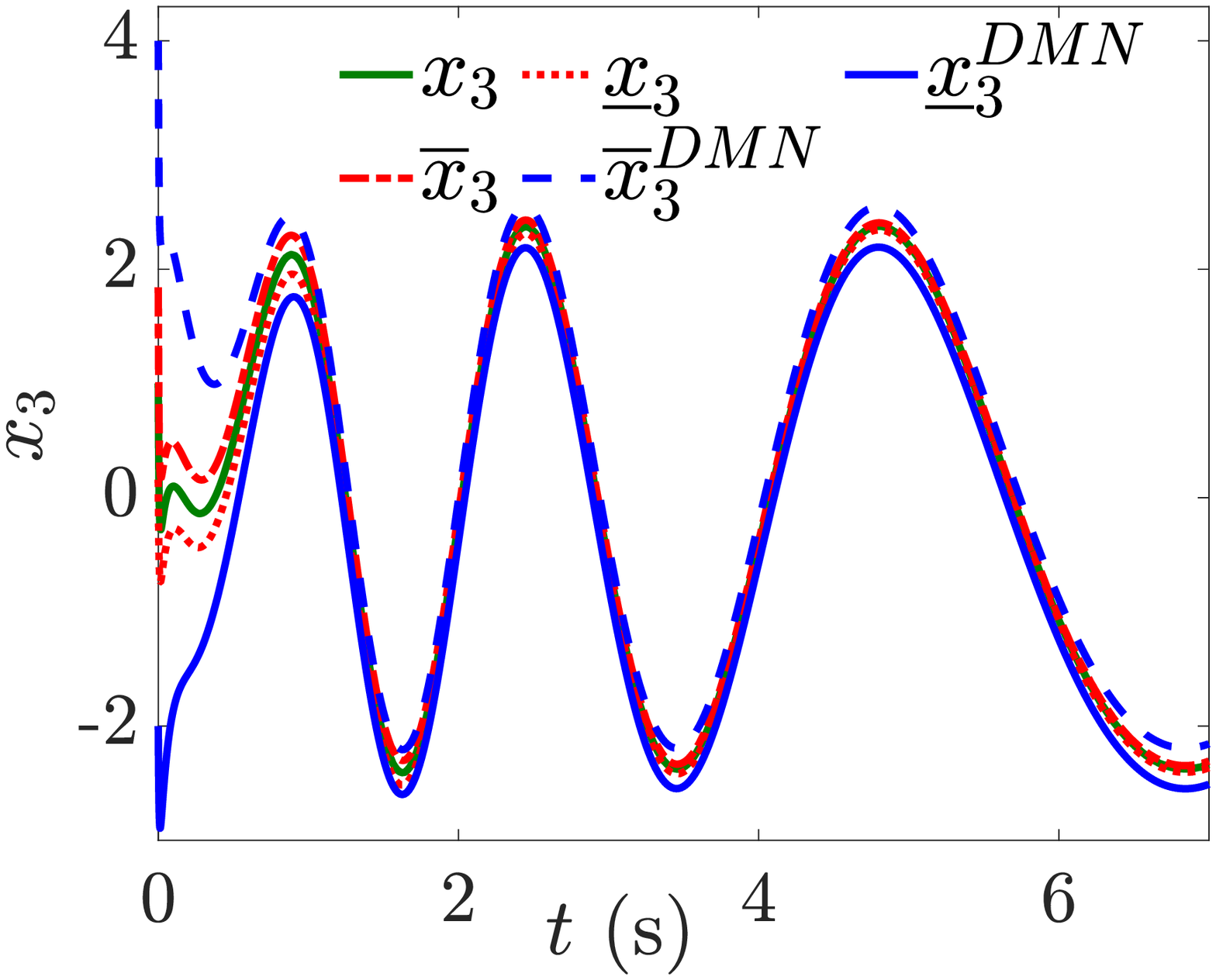}} \label{fig:sub1}
%{\includegraphics[width=0.48\columnwidth,trim=5mm -5mm 10mm 0mm]{Figures/e3_CT_2.eps}} \label{fig:sub2}
{\includegraphics[width=0.48\columnwidth,trim=5mm -5mm 10mm 0mm]{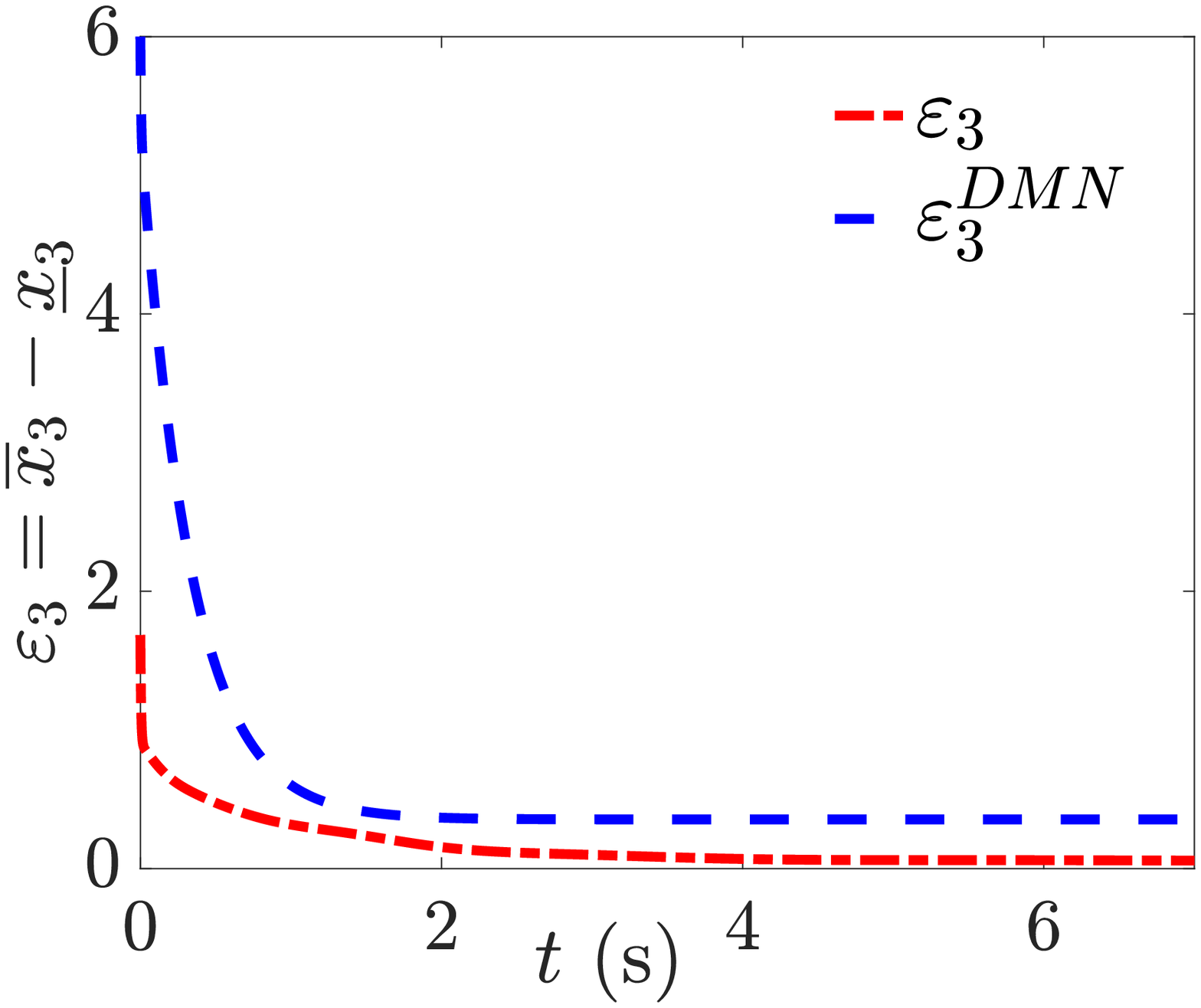}} \label{fig:sub2}
%{\includegraphics[width=\columnwidth,trim=10mm 0mm 10mm 0mm]{Figures/x3_CT.eps}} \label{fig:sub2}
%{\includegraphics[scale=.4,trim=10mm 0mm 10mm 0mm]{Figures/x3_CT.eps}} \label{fig:sub2}
%\hfil
%{\includegraphics[width=0.49\columnwidth,trim=10mm 0mm 10mm 0mm]{Figures/F3_con.eps}}\label{fig:sub3}
\vspace{-.3cm}
\caption{{{\small State, $x_3$, as well as its upper and lower framers and error \mk{returned by our proposed observer}, $\overline{x}_3,\underline{x}_3,\varepsilon_3$, \mk{and by the observer in \cite{dinh2014interval}, $\overline{x}^{DMN}_3,\underline{x}^{DMN}_3,\varepsilon^{DMN}_3$} for the CT System example.}}}
\label{fig:figure1}
\end{figure}

\begin{figure}[t!] % for sub figures over two columns in 
\centering
%{\includegraphics[width=0.49\columnwidth,trim=10mm 15mm 10mm 0mm]{Figures/F1_dis.eps}}\label{fig:sub4}
%\hfil
%{\includegraphics[width=0.49\columnwidth,trim=10mm 15mm 10mm 0mm]{Figures/F2_dis.eps}}\label{fig:sub5}
%\hfil
%{\includegraphics[width=0.48\columnwidth,trim=10mm 5mm 10mm 0mm]{Figures/x2_DT.eps}}\label{fig:sub4}
{\includegraphics[width=0.48\columnwidth,trim=10mm 5mm 10mm 0mm]{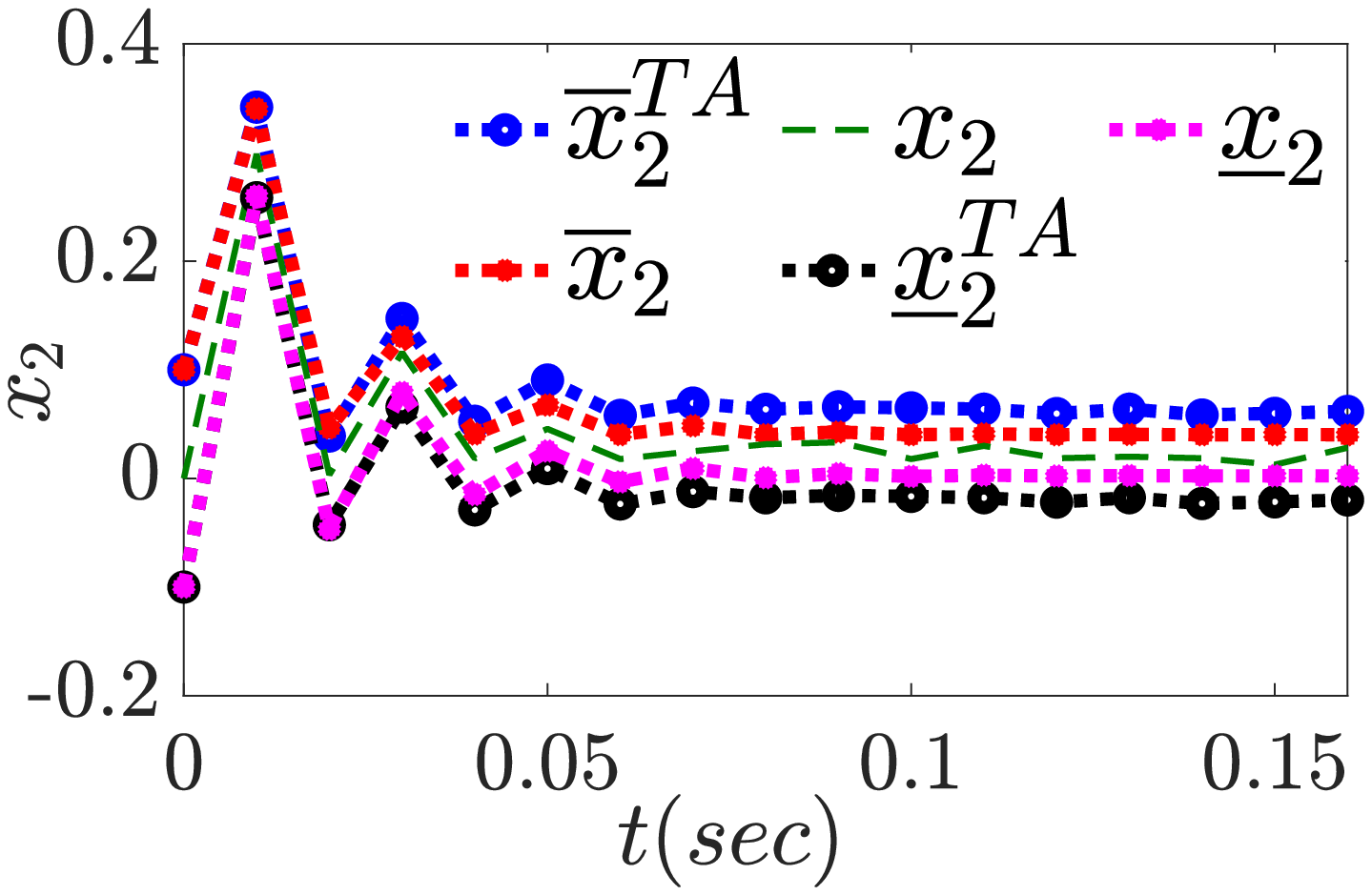}}\label{fig:sub4}
%{\includegraphics[width=\columnwidth,trim=10mm 15mm 10mm 0mm]{Figures/F1_new.eps}}\label{fig:sub4}
%{\includegraphics[scale=.4,trim=10mm 15mm 10mm 0mm]{Figures/F1_new.eps}}\label{fig:sub4}
%\hfil
%{\includegraphics[width=0.48\columnwidth,trim=7mm 5mm 10mm 0mm]{Figures/error_Linf_DT.eps}}\label{fig:sub5}
{\includegraphics[width=0.48\columnwidth,trim=7mm 5mm 20mm 0mm]{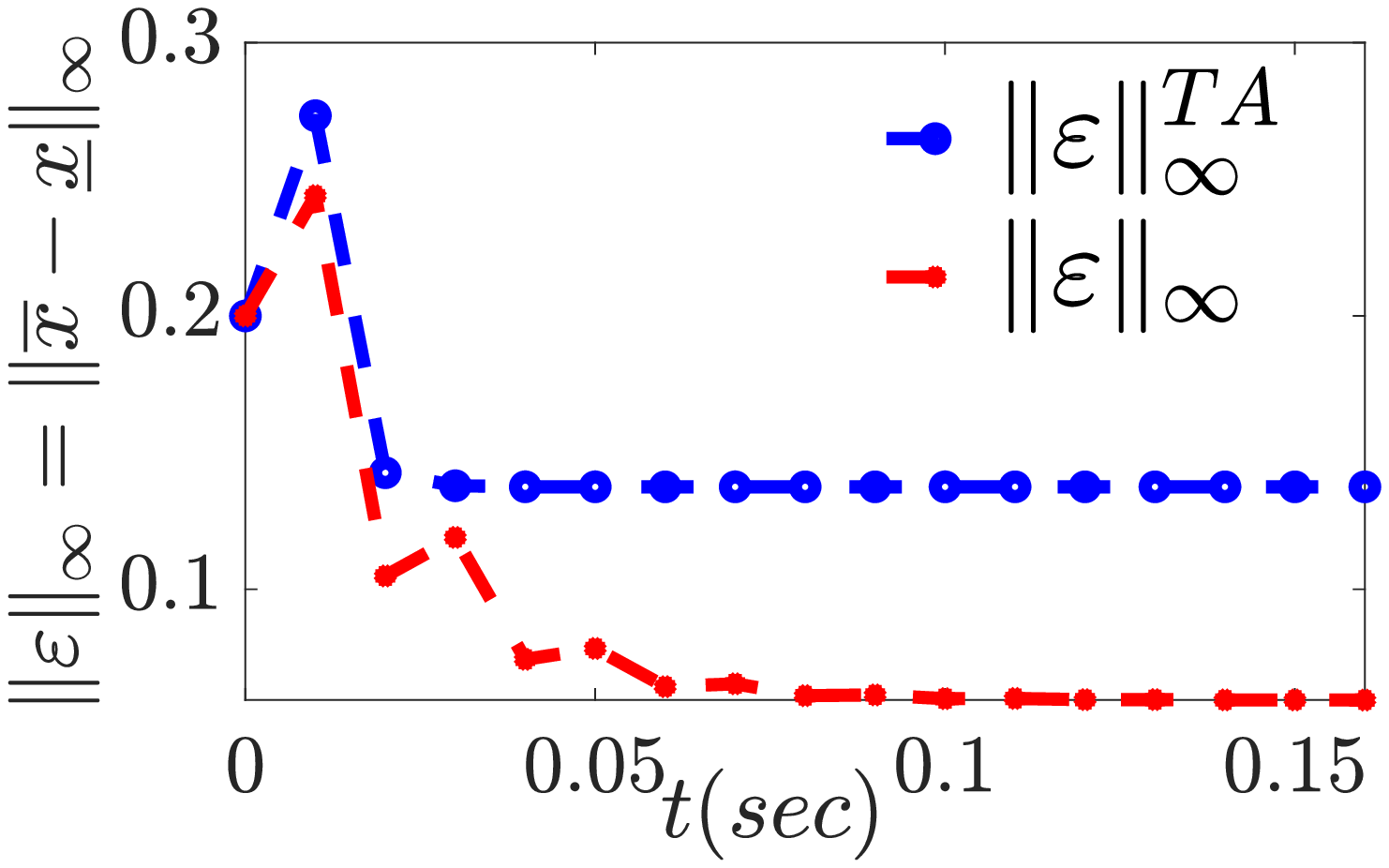}}\label{fig:sub5}
%{\includegraphics[width=\columnwidth,trim=10mm 15mm 10mm 0mm]{Figures/F2_new.eps}}\label{fig:sub5}
%{\includegraphics[scale=.4,trim=10mm 15mm 10mm 0mm]{Figures/F2_new.eps}}\label{fig:sub5}
%\hfil
\caption{\small\mk{State, $x_2$, and its upper and lower framers, \mk{returned by our proposed observer}, $\overline{x}_2,\underline{x}_2$, \mk{and by the observer in \cite{tahir2021synthesis}, $\overline{x}^{TA}_2,\underline{x}^{TA}_2$} (left) and norm of framer error (right) for the DT System example.}}%{States, $x_1,x_2$, as well as upper and lower framers, \mk{returned by our proposed observer} $\overline{x}_1,\overline{x}_2,\underline{x}_1,\underline{x}_2$, \mk{and by the observer in \cite{tahir2021synthesis}, $\overline{x}^{TA}_2,\overline{x}^{TA}_3,\underline{x}^{TA}_2,\underline{x}^{TA}_3$} for the DT System example.}}
\label{fig:figure2}
\end{figure}

\section{Conclusion and \md{Future Work}} \label{sec:conclusion}
\md{A novel \yo{unified} approach to \yo{synthesize} interval-valued observers for bounded-error locally Lipschitz nonlinear continuous-time (CT) and discrete-time (DT) systems with nonlinear noisy observations was presented. The proposed observer was shown to be correct by construction using mixed-monotone decompositions, \yo{i.e.,} %. In other words, without imposing additional constraints and assumptions such as global Lipschitz continuity or contraction, 
the true state  trajectory of the system \yo{is guaranteed} % was shown 
to be framed by the states of the observer \yo{without the need for additional constraints or assumptions}. Moreover, \yo{we provide semi-definite programs for both CT and DT cases to find} input-to-state stabilizing observer gains %was computed by solving semi-definite programs for both CT and DT cases, where 
\yo{that are} %the designed observer was 
proven to be optimal in the sense of $\mathcal{H}_{\infty}$. Finally, simulation results demonstrated \yo{the} better performance of the \yo{proposed interval observers when} compared to some benchmark CT and DT interval observers. Designing hybrid interval observers and \yo{considering unbounded} unknown inputs %to the settings 
will be considered \yo{in} our future work.}
%In addition, we address the stability by solving semi-definite programs with Linear Matrix Inequalities (LMIs) constraints, and compare the performance of the observer with some some benchmark CT and DT observers.
%A novel unified interval observer synthesis approach was presented for locally Lipschitz nonlinear \yong{continuous-time (CT) and discrete-time (DT)} systems with nonlinear observations. Leveraging {mixed-monotone decompositions}, the proposed observer satisfies the \emph{\yong{correctness} property} by construction, i.e., the true state trajectory of the system was shown to be \emph{framed} by the states of the observer \yong{at} all times, \yong{without needing restrictive assumptions such as global Lipschitz continuity or contraction. %or the application of coordinate transformations. 
%Moreover, by solving a semi-definite program based on some sufficient conditions with LMIs, a stabilizing observer gain was designed to ensure that the observer errors converge to zero asymptotically.} Finally, the effectiveness of the proposed observer, \mk{%as well as its performance 
%when compared to some benchmark observers, was} demonstrated using illustrative  DT and CT system examples. \mo{In our future work, we will consider noise and uncertainties within our framework as well as hybrid system dynamics, and extend %adding noise to our framework, extending 
%our approach to simultaneous input and state \yong{estimation}. } %observer synthesis}.
\bibliographystyle{unsrturl}
%\vspace{-.1cm}
\bibliography{biblio}

\end{document}